\PassOptionsToPackage{margin=2cm}{geometry}
\documentclass[sigconf, nonacm]{acmart}

\usepackage{soul}
\usepackage{flushend}

\usepackage{pgfplots}
\pgfplotsset{compat=1.18}
\usepackage{caption}
\usepackage{subcaption}
\usepackage{microtype}
\usepackage[nomath]{lmodern}
\usepackage[T1]{fontenc}

\definecolor{listdbcolor}{HTML}{1f77b4}
\definecolor{veliriscolor}{HTML}{d62728}
\definecolor{groupcommitcolor}{HTML}{2ca02c}
\pgfplotsset{
  bench style/.style={
    width=\linewidth,
    height=4.5cm,
    grid=both,
    grid style={line width=0.2pt, draw=gray!25},
    major grid style={line width=0.5pt, draw=gray!55},
    minor tick num=1,
    tick align=outside,
    legend style={
      font=\scriptsize,
      at={(0.03,0.97)},
      anchor=north west,
      fill=white, fill opacity=0.85,
      draw=gray!50,
      text opacity=1,
      row sep=0.5pt,
    },
    label style={font=\small},
    tick label style={font=\footnotesize},
    title style={font=\small\bfseries, yshift=2pt},
    every axis plot/.append style={line width=1.5pt},
    mark size=2.8pt,
    enlarge x limits=0.05,
    enlarge y limits=0.10,
  },
}

\everypar{\looseness=-1}

\usepackage{wrapfig}

\usepackage{dokeeffe-macros}
\usepackage{algorithm}
\usepackage{algpseudocode}

\usepackage{xspace}
\usepackage{graphicx} 
\usepackage{hyperref}

\makeatletter
\pretocmd{\algorithmic}{%
	  \renewcommand{\theHALG@line}{\thealgorithm.\arabic{ALG@line}}%
		}{}{}
\makeatother

\usepackage{listings}
\usepackage[inline,shortlabels]{enumitem}

\usepackage{pgfplots}
\usepgfplotslibrary{groupplots}

\usepackage{amsmath}
\usepackage{xcolor}

\usepackage{dblfloatfix}
\usepackage{placeins}

\usepackage{tabularx} 
\usepackage{booktabs} 
\usepackage{makecell} 

\usepackage{pifont}
\colorlet{dgreen}{green!80!black}

\newcommand{\cmark}{\textcolor{dgreen}{\ding{51}}} 
\newcommand{\xmark}{\textcolor{red}{\ding{55}}}   

\usepackage{mfirstuc}

\usepackage{pgfplots}

\usepackage{circuitikz}
\usepackage{tikz}
\usetikzlibrary{positioning,shapes.geometric,arrows.meta,calc}

\usepackage[capitalise]{cleveref}

\newcommand{\PutC}{\Op{put}\xspace}
\newcommand{\GetC}{\Op{get}\xspace}
\newcommand{\DeleteC}{\Op{delete}\xspace}

\newcommand{\WriteBatchC}{\Op{write\_batch}\xspace}

\newcommand{\phaseLocate}{\emph{locate}\xspace}
\newcommand{\phasePrepare}{\emph{prepare}\xspace}
\newcommand{\phaseAttach}{\emph{attach}\xspace}
\newcommand{\phasePromote}{\emph{promote}\xspace}

\crefname{lstlisting}{Listing}{Listings}
\Crefname{lstlisting}{Listing}{Listings}

\definecolor{codegreen}{rgb}{0,0.6,0}
\definecolor{codegray}{rgb}{0.5,0.5,0.5}
\definecolor{codepurple}{rgb}{0.58,0,0.82}
\definecolor{backcolour}{rgb}{0.95,0.95,0.92}

\lstdefinestyle{mystyle}{
    language=C,
    commentstyle=\color{codegreen},
    keywordstyle=\color{magenta},
    numberstyle=\tiny\color{codegray},
    stringstyle=\color{codepurple},
    basicstyle=\ttfamily\footnotesize,
    breakatwhitespace=false,         
    breaklines=true,                 
    captionpos=b,     
    xleftmargin=10pt,
    keepspaces=true,                 
    numbers=left,                    
    numbersep=5pt,                  
    showspaces=false,                
    showstringspaces=false,
    showtabs=false,                  
    tabsize=2,
    keywords={typedef, struct, char, void, end, let, each, for, if, await, then, else, while, do, in, return, global, bool, continue, this, typename, template},
    morekeywords = [2]{initial, true, false, state, ghost, external, thread, special, actions, labels, id, variables, automaton, specification, using, sfence,mfence,flush_opt,STORE, assertions, rely, condition, declarations, with, LOAD, COMPARE_EXCHANGE_STRONG, FLUSH, CRASH, FENCE, compare_exchange_strong},
    keywordstyle= [2]\color{purple}
    }

\newcommand{\key}{\ensuremath{k}}
\newcommand{\keys}{\mathsf{keys}}
\newcommand{\allkeys}{\mathcal{K}}
\newcommand{\valu}{\ensuremath{v}}
\newcommand{\valus}{\mathsf{values}}

\newcommand{\oread}{\mathsf{read}}
\newcommand{\owrite}{\mathsf{write}}
\newcommand{\rt}{\mathsf{rt}}

\newcommand{\allrdr}{\mathsf{Reads}}
\newcommand{\allwrr}{\mathsf{Writes}}
\newcommand{\wrr}{\ensuremath{w}}
\newcommand{\rdr}{\ensuremath{r}}

\newcommand{\WR}{\mathsf{wr}}
\newcommand{\WW}{\mathsf{ww}}
\newcommand{\RW}{\mathsf{rw}}
\newcommand{\RT}{\mathsf{rt}}
\newcommand{\vis}{\mathsf{vis}}

\newcommand{\sys}{{\sc FlintKV}\xspace}
\DeclareRobustCommand{\sysrob}{{\sc FlintKV}\xspace}
\newcommand{\sysplain}{FlintKV}

\crefformat{section}{\S#2#1#3}

\title{\sys: A Fast Durable Storage Engine for Modern Databases}

\author{Sergey Egorov$^{1}$$^{2}$, Gregory Chockler$^{2}$, Brijesh Dongol$^{2}$}
\author{Dan O'Keeffe$^{1}$, Sadegh Keshavarzi$^{2}$}

\affiliation{%
  $^{1}$Royal Holloway, University of London, Egham, UK\ \ 
  $^{2}$University of Surrey, Guildford, UK
}

\begin{document}
\begin{abstract}
Byte-addressable non-volatile memory (NVM) offers an opportunity to rethink storage engine architectures. 
While recent \NVMM key-value stores achieve high throughput for ingestion and point lookups, they omit or underspecify the support for the richer interface guarantees required by modern databases. 
Production key-value engines (e.g., RocksDB) provide point-in-time snapshots, consistent iterators, and atomic batches—features essential for implementing transactions and concurrency control.

We present \sys, an NVM-optimised skiplist-based storage engine that natively supports the full API of production key-value stores. 
\sys supports both atomic batch writes and snapshot-consistent iteration
	efficiently while guaranteeing durable linearizability. \sys can be deployed
	standalone or its durable skiplist can be integrated into existing NVM stores
	to enhance their capabilities. Central to \sys is a novel flat-combining-based concurrency control algorithm that leverages multi-versioning and
	carefully co-designed persistence mechanisms to ensure high performance and
	scalability. Our empirical evaluation shows that \sys can achieve up to a
	75\% improvement in end-to-end throughput over prior work.
\end{abstract}

\maketitle

\bibliographystyle{ACM-Reference-Format}

\section{Introduction}

Modern transactional database systems increasingly rely on persistent
key-value (KV) stores as their core storage
engines~\cite{cockroachdb,tidb,myrocks}.  To support advanced
consistency conditions (such as snapshot isolation), these engines
have a rich API that offers, in addition to basic put/get operations,
high-level operations such as \emph{atomic multi-key write batches}
and \emph{consistent snapshot iterators}. While several SSD-based \kvx
stores such as RocksDB~\cite{rocksdb}, PebbleDB~\cite{pebbledb}, and
LevelDB~\cite{leveldb} offer this functionality, most existing
NVM-based KV stores do not, making it harder for transactional
databases to benefit from the performance of NVM over SSDs.


One exception to this lack of API support is \prdb, a version of
RocksDB adapted for NVM. Although \prdbX inherits RocksDB's rich API,
it attempts to retrofit its concurrency control and persistence
mechanisms to an SSD-based design. As a result its performance suffers
in comparison to clean slate NVM designs with more restricted
APIs~\cite{listdb,bonsaikv,fluidkv}. In particular, \prdbX relies on a
skiplist-based index to balance the need for fast updates with support
for efficient range queries~\citep{pmemrocksdb}. To
the best of our knowledge, there does not exist a durable skiplist (or
comparable data structure) that both enables RocksDB's rich API and
fully exploits the potential performance gains of NVM.


We propose \sys, an NVM-based storage engine that offers 
high performance
and an API suited for use in modern databases. 
At the heart of \sys is a novel persistent skiplist data structure 
with carefully designed concurrency control and persistence mechanisms
that is optimised for performance (see \cref{sec:concurrency-control}). To
support consistent snapshots while minimising interference between
long running scans and update operations, \sys relies on {\em
  multiversioning}. To facilitate atomic write batches, \sys employs
{\em
  flat-combining}~\cite{DBLP:conf/spaa/HendlerIST10,egorov2025fast,DBLP:conf/usenix/EgorovCDOK24},
which naturally supports batching while also reducing synchronisation
overheads under high contention between update operations.

\sys further enhances performance through a novel four phase execution
framework for update operations. A key design goal is to minimise
synchronisation and persistence delays in the flat-combiner's critical
section.  To achieve this, \sys performs index traversals for update
and read operations, bulk persistence of key-value data, and pointer
updates to non-leaf index nodes in a lock-free manner. Only the
minimal structural modifications required for crash-consistency are
handled within the combiner (e.g. to leaf-index and persistent node
pointers and version numbers). The combiner itself is carefully
optimised through judicious prefetching, compare-and-swap elision, and
a novel persistence optimisation that allows point updates to be
committed using only a single asynchronous fence on the critical path.


Our experimental evaluation using RocksDB's \texttt{db\_bench} benchmarking
tool shows that \sys achieves up to a 73\% increase in throughput over
\prdb, and 75\% over ListDB~\cite{listdb}, a high-performance \kvx store
designed specifically for NVM. Moreover, \sys is designed in a modular fashion
and its durable skiplist can also be readily integrated into existing state-of-the-art \kvx
stores to boost their performance and enrich their functionality. We
demonstrate this via case studies of integrating \sys into \prdb as well as
ListDB. Finally, we rigorously specify and prove the correctness
of \sys, showing that it satisfies durable linearizability, the key safety
guarantee for NVM. 

The remainder of this paper is organised as follows: We first survey the APIs of modern SSD \kvx stores, analyse the limitations of NVM \kvx stores, and introduce our system's memory model (\cref{sec:background}). We then present the design of \sys, including its interface, data layout, and key features of its concurrency control and persistence mechanisms (\cref{sec:design}). This is followed by a detailed description of its update and read operations (\cref{sec:updates}). We then describe how \sys recovers from crashes (\cref{sec:recovery}) and outline our correctness argument (\cref{sec:correctness}). The paper finishes with evaluation results (\cref{sec:performance}), related work (\cref{sec:related-work}) and conclusions (\cref{sec:conclusions}). 

\section{Motivation and Background}
\label{sec:background}
We next discuss how the APIs of modern KV stores have evolved to
support database storage engines~(\cref{sec:key-value-store}), as well
as the shortcomings of state-of-the-art \NVMM KV stores for this
important use case~(\cref{sec:hybrid-nvmm-dram}). Finally, we define
the memory consistency model and the \NVMM persistence primitives that
our algorithms assume in the rest of the
paper~(\cref{sec:system-model}).
 
\subsection{KV Store APIs for Modern Databases}
\label{sec:key-value-store}
\kvx stores (e.g., RocksDB~\cite{rocksdb},
PebbleDB~\cite{pebbledb} and LevelDB~\cite{leveldb}) are
commonly used in standalone deployments but also as
\emph{database storage engines} embedded within 
database management systems. For example, TiDB explicitly delegates
persistence of its storage layer (TiKV) to RocksDB on the grounds that
developing a high-performance standalone storage engine requires
careful and costly optimization~\cite{tidb,tidb_storage}. Similarly,
CockroachDB~\cite{cockroachdb} treats its PebbleDB storage engine as a
black-box API.

To enable high-level database features such as transactions and their
associated concurrency control mechanisms, \kvx storage engines typically
provide an interface that goes beyond basis point operations such as \Op{Put},
\Op{Get} and \Op{Delete}. For example, as shown in Table~\ref{tab:kv-api},
RocksDB, PebbleDB and LevelDB provide a consistent \emph{snapshot iterator}
(\Op{Snapshot}) mechanism that allows higher-level database layers to operate
over a point-in-time view of the data without blocking concurrent
writes~\cite{rocksdb,leveldb,pebbledb}. This can be used as a foundational
building block when implementing transaction isolation levels such as {\em
snapshot isolation} (SI)~\cite{snapshot_isolation}.

In addition, the above \kvx storage engines support \emph{atomic
	multi-key write batch} operations (\Op{WriteBatch}), which allow
multiple updates to be applied atomically. This capability is
also important for database transactions: CockroachDB and TiDB use
\Op{WriteBatch} to coordinate distributed transactions, ensuring that
related updates across multiple keys at a server either all commit or
all abort atomically~\cite{cockroachdb_storage_layer}. 

\begin{table}[t]
\centering
\small
\begin{tabular}{|l|p{6cm}|}
\hline
\textbf{Operation} & \textbf{Functionality} \\
\hline
\textbf{Put} & Inserts a key-value pair and internally assigns a monotonically increasing sequence number for versioning \\
\hline
\textbf{Get} & Point lookups to retrieve current values by key \\
\hline
\textbf{Delete} & Removes keys and their associated values \\
\hline
  \begin{tabular}[t]{@{}l@{}}
    \textbf{Snapshot} \\
    {\bf Iterator}
  \end{tabular}
                   & Forward and backward range scans over key ranges with the ability to provide a consistent point-in-time snapshot view. \\
\hline
	\textbf{Write Batch} & Atomic multi-key batches that apply multiple updates in a single step \\
\hline
\end{tabular}
\caption{Core API operations provided by modern key-value stores such as RocksDB, LevelDB, and PebbleDB. These operations go beyond basic point updates to support versioning, transactions, and consistent iteration.}
\label{tab:kv-api}
\end{table}

\subsection{Hybrid \NVMM-DRAM KV Stores}
\label{sec:hybrid-nvmm-dram}
{\em Non-volatile memory} (\NVMM) is a byte-addressable persistent
storage technology that offers write granularity on the order of
hundreds of bytes (e.g., 256 B rather than 4 KB), providing
significantly lower latency than SSDs or HDDs. In recent years, a
growing body of
research~\cite{fftree,pactree,hikv,flatstore,viper,nvtree,fptree,lbtree,bonsaikv,fluidkv,novelsm,matrixkv,chameleondb,listdb,pmemrocksdb,flatlsm,wb+tree,bztree,dash,slmdb}
has explored how to design \kvx stores that leverage \NVMM to
achieve high performance. Since \NVMM is still slower than DRAM, recent research has shifted towards
\emph{hybrid} designs that keep some indexing information in DRAM for
performance but can also leverage \NVMM for both indexing and data.

A popular hybrid approach is to use a multi-stage/tier indexing
strategy, e.g., based on {\em log-structured merge-trees} (LSM
trees)~\cite{pmemrocksdb, listdb}. The standard LSM architecture 
involves two tiers, a \emph{staging tier} and a
\emph{\capacityTier}. The staging tier consists of an \NVMM log for fast data
persistence and a DRAM resident index (or \emph{memtable}) that may also hold a
copy of the data in the log. The \stagingTier stores recently ingested data
efficiently, while the \capacityTier stores the remaining data set in a more
compact format, for example as sorted string tables (SSTables) on \NVMM.
Background processes periodically migrate data from the staging tier
into the \capacityTier.

However, while many hybrid DRAM-\NVMM KV stores support basic operations
(\PutC/\GetC/\DeleteC), only one (PMemRocksDB~\cite{pmemrocksdb})
supports both \Op{Snapshot} and \Op{WriteBatch} operations (see
\cref{tab:pmem-kv}). This means that (apart
from PMemRocksDB) these systems cannot serve as drop-in replacements
for the storage engines of databases such as CockroachDB and
TiDB~\cite{rocksdb_basic_ops_snapshots,rocksdb_snapshot,cockroachdb}.

\begin{table}[t]
\centering
\small
\begin{tabular}{|l|c|c|c|c|}
\hline
Project & NVM & \makecell{Snapshot \\ Iterator}
				& \makecell{Write \\ Batch} & Speed \\
  \hline
	RocksDB~\cite{rocksdb} & \xmark & \cmark & \cmark & \slow \\
	LevelDB\cite{leveldb} & \xmark & \cmark & \cmark & \slow \\
	PebbleDB~\cite{pebbledb} & \xmark & \cmark & \cmark & \slow \\
	Viper~\cite{viper}  & \cmark & \xmark & \xmark & \fast \\
	BonsaiKV~\cite{bonsaikv}  & \cmark &\xmark/\cmark & \xmark & \fast \\
	FluidKV~\cite{fluidkv}  & \cmark &\xmark/\cmark & \xmark & \fast \\
	ListDB~\cite{listdb}  & \cmark & \xmark & \xmark & \fast \\
	\prdb~\cite{pmemrocksdb}  & \cmark &\cmark & \cmark & \slow \\
    \hline
		\sys & \cmark &\cmark & \cmark & \fast \\
\hline
\end{tabular}
	\caption{Comparison of 
          key-value stores with publicly available implementations. For Snapshot, BonsaiKV and FluidKV are labeled \xmark/\cmark~as they support efficient range queries/scans, but they are not consistent (linearizable)}
\label{tab:pmem-kv}
\end{table}


Designing an NVM \kvx store that correctly
supports both \Op{Snapshot} and \Op{WriteBatch} while maintaining good
performance is fundamentally challenging. Both operations demand
careful coordination as concurrent readers must never observe a partially
executed batch, while snapshot iterators must observe a
consistent point-in-time view. \NVMM adds a further layer of
difficulty: crash safety requires that correctness invariants are
preserved across failures, which recent research has shown is
difficult to get right due to \NVMM's complex low-level
interface~\cite{nvm_bug_detection,nvm_bug_detection_2}. Indeed, in our own work on integrating \sys into
a ListDB-based system we identified two critical correctness bugs in the
original ListDB implementation (see~\cref{sec:isolated_evaluation} and
Appendix~\ref{appendix:listdb_bug}).

\prdbX implements the full RocksDB API by reusing RocksDB's
coordination protocol, including a crash-consistency design that
serializes \NVMM logging and the corresponding DRAM index
updates. This design provides correctness and a familiar API, but it
limits concurrency in the \durableStagingTier: operations are grouped
into a write group, the group leader persists the group's data to the
\NVMM log, and each worker then inserts its data into the
memtable. The write group leader then waits for every worker in the
write group to complete their insertions into the memtable before
allowing any of the threads to return, which limits parallelism.  In a
full \prdbX deployment, this overhead is largely masked: the
\capacityTier is notorious for write stalls under high write
contention~\cite{write_stall}, making it the primary
bottleneck. State-of-the-art \NVMM \kvx stores, however, have largely
eliminated write stalls in the \capacityTier, which shifts attention
to the \durableStagingTier as the new performance-critical component.\looseness=-1

\subsection{System Model}
\label{sec:system-model}

We assume a memory model with a FIFO store buffer (e.g., as in
PTSO~\cite{DBLP:journals/pacmpl/RaadWNV20}) that supports instructions
`{\tt CLWB} $x$', `{\tt SFENCE}' and `{\tt MFENCE}', where
\begin{myitemize}
\item {\tt CLWB} $x$ is a {\em
    cache line write back} that tags the last write on $x$ by the same
  thread for flushing to \NVMM without invalidating the cache line corresponding to $x$, and
\item both {\tt SFENCE} and {\tt MFENCE} block until {\em all} tagged
  writes executed by the same thread have been flushed to \NVMM.
\end{myitemize}
Note that the {\tt SFENCE} instruction may be stored in the
writing thread's store buffer, and only takes effect when the {\tt
  SFENCE} is debuffered (i.e., takes effect in main
memory)~\cite{ptsosyn}. In contrast, {\tt MFENCE} has a stronger
semantics that blocks the executing thread until all cached
instructions (including writes and {\tt CLWB}s) executed by the thread
have taken effect in memory.


{\tt CLWB} and {\tt SFENCE} together are strong enough to support a
common message-passing style synchronisation
pattern~\cite{DBLP:journals/pacmpl/RaadWNV20} demonstrated by the
example in \cref{fig:MP}. Variables $x$, $y$ and $z$ are initialised
to $0$. The left thread updates $x$ to $1$, tags this write using {\tt
  CLWB} $x$ and fences the {\tt CLWB} operation using {\tt SFENCE} before
updating $y$ to $1$. 

The right thread reads $y$, and if it sees the
updated value of $y$ updates $z$ to $1$. This means that the program
satisfies the persistent invariant PInv, i.e., if the value of $z$ in
\NVMM is $1$, then the value of $x$ in \NVMM is also $1$ (but $y$ in
\NVMM may be $0$ or $1$). 

Importantly, visibility of \mbox{$y = 1$} by the
right thread indicates that the $y \gets 1$ has been debuffered in the
left thread. Since the store buffers are FIFO ordered this means that
the previously executed {\tt SFENCE} must have also been debuffered,
which by the memory model semantics means that the write $x \gets 1$ tagged by
{\tt CLWB} $x$ must have been persisted. Note that to get the same guarantee when the code in both columns is executed by the same thread, we must
replace {\tt SFENCE} with {\tt MFENCE}.

\begin{wrapfigure}[11]{r}{0.38\columnwidth}
$\begin{array}[t]{@{}|@{}l@{}|@{}}
  \hline
  ~\text{Init:} x = y = z = 0 \\
  \hline
  \begin{array}[t]{@{}l|@{}l@{}}
    \begin{array}[t]{l}
    x \gets 1; \\ 
    {\tt CLWB}\  x; \\
    {\tt SFENCE}; \\
    y \gets 1
    \end{array}
    & 
    \begin{array}[t]{l}
    a \gets y ; \\
    {\bf if}\ a = 1 \\
    \quad z \gets 1 \\
    \end{array}
  \end{array}\\
  \hline
  \hfill \text{PInv}: z = 1 \Rightarrow x = 1 \hfill{} \\
  \hline
\end{array}$

\caption{Message passing synchronisation pattern}
\label{fig:MP}
\end{wrapfigure}

Finally, for convenience we also define more general operations
\Optt{flush\_range}({\tt ptr,} {\tt size}) and
\Optt{persist\_range}({\tt ptr}, {\tt size}). \Optt{flush\_range}
aligns the address range to cacheline boundaries and flushes all cache
lines that cover the interval [{\tt ptr},\ {\tt ptr} {\tt +} {\tt
  size}] to NVM using the platform's persistence primitives. On our
platform, this is implemented using one or more \texttt{CLWB}
instructions. \Optt{persist\_range} flushes the required range using
\Optt{flush\_range} and then executes an \texttt{SFENCE} instruction
to ensure ordering.

\section{\sysrob Design}
\label{sec:design}
We present \sys, a flexible NVM key-value storage engine that provides a rich API suitable for use in modern databases. To achieve high performance, \sys relies on a hybrid NVM-DRAM skiplist index. The core contributions of \sys centre on the durable skiplist's carefully co-designed concurrency control and persistence mechanisms. 

We next define \sys's interface and correctness guarantees~(\cref{sec:interface}), overview its core data structures~(\cref{sec:core-data-structures}), and outline the key design principles underlying its concurrency control mechanisms~(\cref{sec:concurrency-control}). 


\subsection{Interface Specification}
\label{sec:interface}
\sys provides an API consisting of the following set of operations: 

\begin{description}[leftmargin=0pt,itemsep=2pt,parsep=2pt,font=\normalfont]
	\item [\Op{Get}(\textsc{key}) $\Rightarrow$ \textsc{value}.] Returns the value of the latest version of \texttt{key},  or \texttt{NOT\_FOUND} if no such version exists.

	\item [\Op{Put}(\textsc{key}, \textsc{value})] Stores a new version of the value for the given \texttt{key}. When \Op{Put} returns the new version is durably stored and is visible to subsequent readers. 

	\item [\Op{Delete}(\textsc{key})] Stores a delete marker for \texttt{key}. When \Op{delete} returns the delete entry is durable and visible; readers that observe the delete marker for the key treat it as deleted (i.e., \Op{get} returns \texttt{NOT\_FOUND}). 
	\item [\Op{WriteBatch}(\textsc{ops})] Atomically and durably applies a collection of \Op{put} and \Op{delete} operations. The batch semantics guarantee atomic visibility: once \Op{write\_batch} returns all updates in the batch become visible together (and durably persisted), and readers never observe a partial set of the batch's updates.

	\item [\Op{Snapshot}(\textsc{start}, \textsc{end}) $\Rightarrow$ \Op{Iterator}.] Produces a sequence of key-value pairs that represents a snapshot of the keys in the range [\texttt{start},\texttt{end}]. For convenience this sequence is exposed as an iterator. 


\end{description}

\mypar{Durable linearizability} \sys is designed to implement
\emph{durable linearizability}~\cite{DBLP:conf/wdag/IzraelevitzMS16},
a standard correctness condition for persistent-memory data
structures. Durable linearizability strengthens
linearizability~\cite{DBLP:journals/toplas/HerlihyW90} by additionally
requiring crash safety. Durable linearizability assumes that threads
executing before a crash are not resumed and requires that any history
of operation invocations, responses and system crashes must be
linearizable when the crashes are removed from the history. This means
that any operation that has linearized must also be persisted. Since
operations must linearize before they return, every completed
operation must also be persistent.

\begin{figure}
    \centering
    \includegraphics[width=1\linewidth]{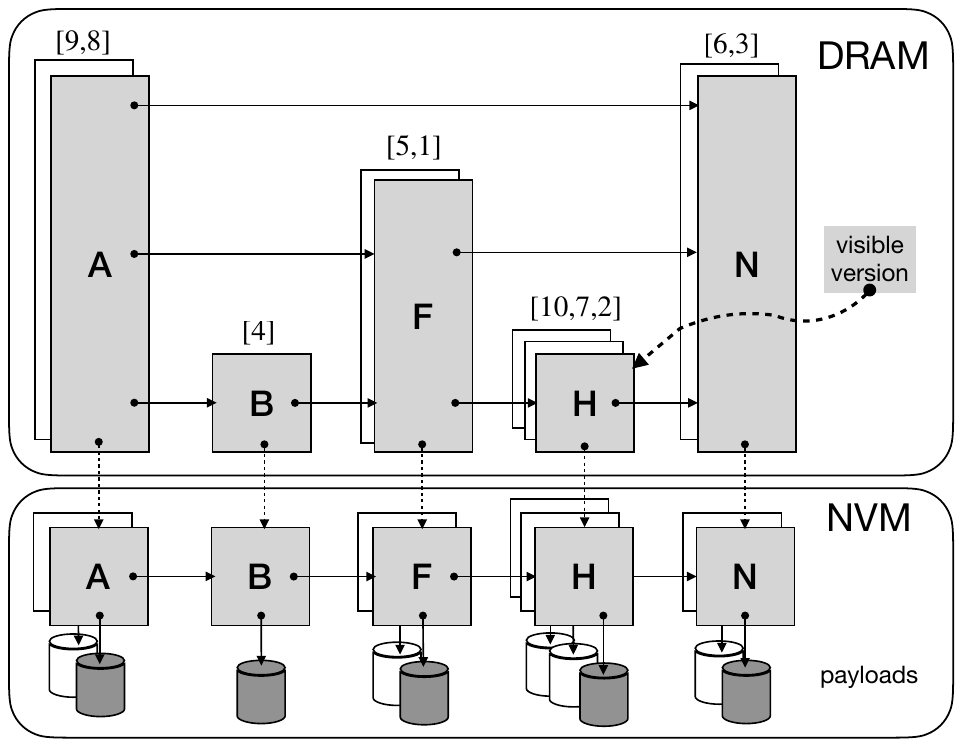}
		\caption{\sys's core data structures are split into volatile and persistent layers (\cref{sec:core-data-structures}). The skiplist nodes are multi-versioned to support \sys's concurrency control. The most recent version across all nodes is stored in \texttt{visible\_version} - in this case 10 for node H (\cref{sec:concurrency-control})}
    \label{fig:core-data-structure}
\end{figure}

\subsection{Core Data Structures\label{sec:core-data-structures}}

At a high-level, \sys consists of a concurrent durable skiplist, as
illustrated in \cref{fig:core-data-structure}. Its state can be divided into two layers, a volatile
layer stored in DRAM containing the skiplist index and a persistent layer
stored in NVM containing the user's key-value data.

\myparr{Volatile Layer:} The volatile (DRAM) layer stores internal index nodes of the skiplist to avoid costly NVM accesses for index traversals and modifications. 
Each volatile index node (\texttt{IndexNode}) contains the user's key, and per-level
successor index pointers. Each index node also contains a pointer to a corresponding persistent node in the persistent layer. 

\myparr{Persistent layer:} The persistent (NVM) layer is responsible for
storing inserted or updated key-value pair nodes durably such that
they can be recovered after a crash. In isolation it constitutes a persistent
linked list. Each persistent node (\texttt{NVMNode}) contains a user
key–value pair and a pointer to its successor persistent node.

\begin{figure*}[t]
    \centering

 \begin{tikzpicture}[
    font=\ttfamily,
    >=Stealth,
    task/.style={draw,thick},
    op/.style={draw,thick,rounded corners=1pt},
    locate_op/.style={op,fill=green!30,draw=black},
    prepare_op/.style={op,fill=cyan!20,draw=black},
    attach_op/.style={op,fill=yellow!40,draw=black},
    promote_op/.style={op,fill=cyan!50,draw=black},
    get_op/.style={op,fill=orange!40,draw=black},
    locate/.style={task,fill=green!30,draw=black},
    prepare/.style={task,fill=cyan!20,draw=black},
    attach/.style={task,fill=yellow!40,draw=black},
    promote/.style={task,fill=cyan!50,draw=black},
    empty/.style={task,fill=white,draw=black},
    get/.style={task,fill=orange!40,draw=black},
    batchline/.style={black},
    timeline/.style={draw=black,thick,rounded corners=1pt,fill=cyan!20},
    thinsep/.style={gray!40}
]

\def\xmin{0}
\def\xmax{14}

\def\yTOne{4}
\def\yTTwo{3}
\def\yTThree{2}
\def\yTFour{1}
\def\yTFive{0}



	 \draw[->,thick,black] (2.7,-0.25) --node[pos=0.92,black,above] {{\tt Time}} (14,-0.25);


	 \node[anchor=west,font=\large,black] at (-0.8,\yTOne) {{\tt T1: Put(c)}};

	 \node[anchor=west,font=\large,black] at (-0.8,\yTTwo) {{\tt T2: WB([m,k])}};

	 \node[anchor=west,font=\large,black] at (-0.8,\yTThree) {{\tt T3: Put(e)}};

	 \node[anchor=west,font=\large,black] at (-0.8,\yTFour) {{\tt T4: Put(o);Put(k)}};

\node[anchor=west,font=\large,black] at (-0.8,\yTFive) {{\tt T5: Get(m);Put(j)}};

\begin{scope}[xshift=0.7cm]

	 \node[font=\large,black] at (4.65,1.5) {{\tt b1}};
	 \node[font=\large,black] at (6.55,4.5) {{\tt b2}};
	 \node[font=\large,black] at (9.80,1.5) {{\tt b3}};

\draw[prepare] (3,\yTOne-0.25) rectangle (10,\yTOne+0.25);

\draw[locate] (3, \yTOne-0.25) rectangle (4, \yTOne+0.25);

\draw[attach] (5.8,\yTOne-0.25) rectangle (6.2,\yTOne+0.25);
\node at (6,\yTOne) {{\tt c}};

\draw[attach] (6.2,\yTOne-0.25) rectangle (6.9,\yTOne+0.25);
\node at (6.55,\yTOne) {{\tt m,k}};

\draw[attach] (6.9,\yTOne-0.25) rectangle (7.3,\yTOne+0.25);
\node at (7.1,\yTOne) {{\tt e}};

\draw[promote] (7.3,\yTOne-0.25) rectangle (10,\yTOne+0.25);

\draw[prepare] (2.5,\yTTwo-0.25) rectangle (7.3,\yTTwo+0.25);
\draw[attach] (6.2,\yTTwo-0.25) rectangle (7.3,\yTTwo+0.25);
\draw[locate] (2.5,\yTTwo-0.25) rectangle (3.0,\yTTwo+0.25);

\draw[promote] (6.9,\yTTwo-0.25) rectangle (11,\yTTwo+0.25);

\draw[timeline] (2.0,\yTThree-0.25) rectangle (10,\yTThree+0.25);
\draw[locate] (2.0,\yTThree-0.25) rectangle (2.5,\yTThree+0.25);
\draw[attach] (6.9,\yTThree-0.25) rectangle (7.3,\yTThree+0.25);
\draw[promote] (7.3,\yTThree-0.25) rectangle (10,\yTThree+0.25);

\draw[timeline] (3,\yTFour-0.25) rectangle (4.9,\yTFour+0.25);

\draw[timeline] (6,\yTFour-0.25) rectangle (12.5,\yTFour+0.25);
\draw[locate] (3,\yTFour-0.25) rectangle (3.5,\yTFour+0.25);

\draw[attach] (4.4,\yTFour-0.25) rectangle (4.9,\yTFour+0.25);
\node at (4.65,\yTFour) {{\tt o}};

\draw[locate] (6.0,\yTFour-0.25) rectangle (7.0,\yTFour+0.25);

\draw[attach] (9.3,\yTFour-0.25) rectangle (9.8,\yTFour+0.25);
\node at (9.55,\yTFour) {{\tt k}};
\draw[attach] (9.8,\yTFour-0.25) rectangle (10.3,\yTFour+0.25);
\node at (10.05,\yTFour) {{\tt j}};
\draw[promote] (10.3,\yTFour-0.25) rectangle (12.5,\yTFour+0.25);

\draw[promote] (4.9,\yTFour-0.25) rectangle (5.8,\yTFour+0.25);

\draw[get] (3.0,\yTFive-0.25) rectangle (5.0,\yTFive+0.25); 
\node at (4.0,\yTFive) {{\tt Get(m)}};
\draw[timeline] (5.2,\yTFive-0.25) rectangle (11.0,\yTFive+0.25);
\draw[locate] (5.2,\yTFive-0.25) rectangle (6.0,\yTFive+0.25);
\draw[attach] (9.8,\yTFive-0.25) rectangle (10.3,\yTFive+0.25);
\draw[promote] (10.3,\yTFive-0.25) rectangle (11.5,\yTFive+0.25);


\draw[->,thick,black,dotted] (10.05,\yTFive+0.25) -- (10.05,\yTFour-0.25);
\draw[->,thick,black,dotted] (7.1,\yTThree+0.25) -- (7.1,\yTOne-0.25);
\draw[->,thick,black,dotted] (6.55,\yTTwo+0.25) -- (6.55,\yTOne-0.25);

\end{scope}

\begin{scope}[yshift=-1cm]
  \node[locate_op,minimum width=2cm,minimum height=0.5cm] (legL) at (2.5,0) {Locate};
  \node[prepare_op,minimum width=2cm,minimum height=0.5cm,right=1cm of legL] (legP) {Prepare};
  \node[attach_op,minimum width=2cm,minimum height=0.5cm,right=1cm of legP] (legA) {Attach};
  \node[promote_op,minimum width=2cm,minimum height=0.5cm,right=1cm of legA] (legPr) {Promote};
\end{scope}

\end{tikzpicture}

		\caption{The four phases of update operations (\eg Put, WB=WriteBatch) are designed to maximise parallelism across threads. The \phaseLocate, \phasePrepare, and \phasePromote phases of different threads can execute in parallel, while a single combiner thread executes batches of update operations on behalf of itself and other worker threads during the \phaseAttach phase (e.g. batches b1--3 in the figure). Gets and Snapshot reads execute in parallel with all phases of the update operations.}
    \label{fig:timeline_visual}
\end{figure*}

\subsection{Concurrency Control}
\label{sec:concurrency-control}
\sys achieves high performance through careful co-design of its concurrency control and persistence mechanisms. \sys's concurrency control combines multi-versioning with a four-phase execution flow  (\phaseLocate, \phasePrepare, \phaseAttach and \phasePromote) for update operations (\Op{Put}, \Op{Delete} and \Op{WriteBatch}), as illustrated in \cref{fig:timeline_visual} and described in detail in the next section (\cref{sec:updates}). We next overview the core design principles underlying these mechanisms.  

\myparr{Multiversioning.} To ensure durable linearizability while minimizing contention, for example between concurrent write operations and long-running scans, \sys's durable skiplist is multi-versioned, as shown in \cref{fig:core-data-structure}. Each index node includes a version number, which is mirrored in the corresponding persistent node. Version numbers are globally unique, with the exception of atomic \Op{WriteBatch} operations (see \cref{sec:wb}), and incremented for every update. The volatile global variable \texttt{visible\_version} tracks the highest version number used by any update operation. After a crash, \sys reconstructs the pre-crash value of \texttt{visibile\_version} based on the version numbers of persistent nodes (see \cref{sec:recovery}). 

\myparr{Lock-free index traversals.} \sys executes index traversals in a lock-free manner to minimise synchronisation delays when persisting subsequent index modifications. This includes traversals for update operations and read-only operations (\eg~\Op{Get}). For updates an initial lock-free \phaseLocate phase searches optimistically for an appropriate insertion point for the updated node. Similarly, a final lock-free \phasePromote phase lazily updates non-leaf level index node pointers in parallel after index modifications are persisted. 

\myparr{Lock-free bulk persistence.} Given the performance disparity between NVM and DRAM, \sys parallelises bulk persistence operations (e.g. of a new \texttt{NVMNode}'s key and value), which are decoupled from persistence of structural modifications (\eg to pointers and version numbers). Workers perform bulk persistence operations after the \phaseLocate phase as part of a lock-free \phasePrepare phase that allocates, initialises and persists new \texttt{NVMNode}s, but defers linking to them from existing nodes in the persistence layer. 

\myparr{Batch updates using flat-combining.} \sys updates pointers to prepared \texttt{NVMNode}s in the persistent 
layer and the leaf level of the corresponding \texttt{IndexNode} atomically for crash consistency. To reduce
synchronisation overheads under high contention, \sys introduces an \phaseAttach phase that batches these
updates using a flat-combining protocol~\cite{DBLP:conf/spaa/HendlerIST10}.
Workers compete for an exclusive \emph{combiner} lock to modify the base 
layers. The successful combiner thread collects and applies updates on
behalf of itself and the other workers. Due to multi-versioning and the
exclusive lock, no further synchronisation is needed within the combiner
(\eg CAS instructions on pointer updates). To boost concurrency, the
combiner employs an \emph{early release} policy where a waiting worker thread
returns as soon as its prepared node(s) are attached. 

\myparr{Optimising flat-combining for persistence.} \sys introduces several optimisations to reduce persistence delays during
the flat-combining critical section. Since bulk persistence is handled during
the lock-free \phasePrepare phase, the combiner need only persist a minimal amount of critical state (\eg one node
pointer and version number for a \Op{Put} operation). This minimal state is prefetched into cache at the end of the \phasePrepare phase,
which avoids incurring the substantial NVM fetch latency on a cache miss during the \phaseAttach phase. The combiner's persistence operations
are also carefully co-designed with \sys's crash recovery algorithm to require only a single \texttt{SFENCE} for \Op{Put} 
operations (\cref{sec:put}), in contrast to a naive approach requiring separate fences for pointer and version number persistence. 



\newcommand{\inBatch}{in\_batch}

\section{Update and Read Operations}
\label{sec:updates}
We next describe in detail how the design principles outlined in the previous section are realised in the implementation of \sys's update and read operations. We focus first on the description of the \Op{Put}  operation~(\ref{sec:put}), and then briefly outline how it is adjusted to support atomic \Op{WriteBatch}~(\ref{sec:wb}). \Op{Delete} operations insert a tombstone version using the \Op{Put} algorithm, so we omit a separate description here. Finally, we describe briefly \sys's read-only operations (\Op{Get} and \Op{Snapshot}).

\subsection{\Op{Put} Operation}\label{sec:put}

\myparr{Locate:} A \PutC operation begins by executing \phaseLocate,
which conducts a standard lock-free traversal over the index to
collect the predecessor and successor nodes at each level
(\cref{algo:add_outer_shell}, line~\ref{code:search}). This search is
guaranteed to terminate because the skiplist uses sentinel
head and tail nodes.

The \phaseLocate phase executes optimistically, since the \phaseAttach phase may
later discover that the cached neighbours are stale (e.g., a concurrent insertion has
already placed a node at the target location). Therefore, the worker enters a
retry loop that repeatedly executes \phaseLocate~and \phasePrepare followed by
\phaseAttach until the operation
succeeds~(lines~\ref{code:retry_loop_enter}-\ref{code:retry_loop_exit}).

\myparr{Prepare:} On entering the \phasePrepare phase for the first time, (as indicated by a \texttt{NULL} index node parameter),
the worker thread allocates the persistent node and assigns the key and
value (line~\ref{code:pmem_node_create_and_persist}). It subsequently sets the version number to a temporary initial value of \MAXUINT. The final version is assigned later as part our optimised crash recovery algorithm. The worker then flushes the entire memory range occupied by the p\_node to persistent memory (line~\ref{code:p_node_flush_only}), but does not execute \texttt{SFENCE} at this point. The  worker then creates the index node and links it to the p\_node (line~\ref{code:index_node_create_and_link}). These allocation and
persistence steps are performed once per operation and are not
repeated on subsequent retries because they do not depend on the
target location in the skiplist. 

The next part of \Op{prepare} is executed on each invocation. The worker
configures the index node's successor pointers and the persistent node's
successor pointer to point to the cached neighbours collected during
\phaseLocate
(lines~\ref{code:index_node_config_start}-\ref{code:pmem_config_start}). The
worker then calls \persistr, which flushes the cache line containing the
persistent node's successor pointer to persistent memory and executes an
\texttt{SFENCE} to ensure correct ordering
(line~\ref{code:persist_dirty_lines}). To improve performance, it then prefetches the persistent node
into the CPU cache to reduce NVM fetch latency for the combiner thread that will execute
\phaseAttach (line~\ref{code:prefetch_pmem_node}). Finally, it returns the prepared
index node to the caller.

\myparr{Attach (Worker):} Once the nodes are prepared, the worker gets ready for the \phaseAttach phase. It first creates 
a flat-combining advertisement with the information the combiner thread will need~(line~\ref{code:arg_package}).
Advertisements includes the prepared nodes, their neighbours, and a status field set to \texttt{Ready} indicating the request is ready for a combiner thread to process. The worker inserts the advertisement into its own dedicated slot in a global advertisements array, and then enters the \phaseAttach phase (lines~\ref{code:arg_package}-\ref{code:pass_to_flat_combining}). 
Inside \phaseAttach, the worker spins in a short wait loop: it either observes that a combiner has processed its advertisement (line~\ref{code:combiner_finished_check}), or it succeeds in acquiring the flat-combining lock and becomes the combiner. 



\begin{algorithm}[!htbp]
\caption{\label{algo:add_outer_shell}Put operation}
\begin{algorithmic}[1]
\Function{\Op{put}}{key, value}
		\State i\_node $\leftarrow$ \texttt{NULL}
    \While{true} \label{code:retry_loop_enter}
				\State (preds, succs) $\leftarrow$ \textcolor{blue}{\Op{locate}(key)} \label{code:search}
				\State i\_node $\leftarrow$ \textcolor{blue}{\Op{prepare}(key, value, succs, i\_node)}
				\State \gAdvertisements[\LocalThreadId] $\leftarrow$ (i\_node, preds, \texttt{Ready})\label{code:arg_package} 
        \State \textcolor{blue}{\Op{attach}()}\label{code:pass_to_flat_combining} \label{code:flat_combining_invocation}
				\If{\gAdvertisements[\LocalThreadId].status = \texttt{Success}}
            \State \textbf{break}
        \EndIf
    \EndWhile \label{code:retry_loop_exit}
		\State \textcolor{blue}{\Op{promote}(\gAdvertisements[\LocalThreadId])} \label{code:promote}
\EndFunction
\item[]
\Function{\Op{prepare}}{key, value, succs, i\_node}
		\If{i\_node = \texttt{NULL}}
        \State p\_node $\leftarrow$ \Op{create\_pmem\_node}(key, value)\label{code:pmem_node_create_and_persist}
				\State p\_node.version $\leftarrow$ \texttt{MAX\_UINT} \label{code:pmem_seq_init}
								\State \Optt{Flush\_range}(p\_node, p\_node.get\_size() \label{code:p_node_flush_only}
    \State i\_node $\leftarrow$ \Op{create\_index\_node}(key, p\_node)\label{code:index_node_create_and_link}
    \EndIf
    \State i\_node.succs $\leftarrow$ succs \label{code:index_node_config_start}
    \State p\_node.next $\leftarrow$ succs[0].\INbaseNode \label{code:pmem_config_start}
		\State \Optt{persist\_range}(\&p\_node.succ, \texttt{PTR\_SIZE}) \label{code:persist_dirty_lines}
    \State \Op{prefetch}(i\_node.p\_node) \label{code:prefetch_pmem_node}
		\State \Return i\_node
\EndFunction
\item[]
    \Function{attach}{ }
\State my\_adv $\leftarrow$ \gAdvertisements[\LocalThreadId]
\While{true}
\If{my\_adv.status $\in$ \{\texttt{Success}, {\tt Failed}\}} \label{code:combiner_finished_check}
        \State \textbf{return}
    \EndIf

    \If{\Op{try\_acquire\_lock}()} \label{code:fc_lock_acquire}
        \For{adv in \gAdvertisements} \label{code:fc_loop_start}
            \If{adv.status = \texttt{Ready}} \label{code:per_entry_ready_check}
								\State \ProcessPut(adv, false) \label{code:call_process_put}
            \EndIf \label{code:fc_loop_end}
        \EndFor

        \State \Op{lock\_release}()
				\State \Op{MFENCE}() \label{code:mfence}
        \State \textbf{return}
    \EndIf
\EndWhile
\EndFunction

\end{algorithmic}
\end{algorithm}

\myparr{Attach (Combiner):} The combiner scans the advertisements array and processes each advertisement in the \texttt{Ready} state (lines \ref{code:fc_loop_start}--\ref{code:fc_loop_end}).
In rare cases, a thread may acquire the combiner lock before observing that its advertisement has been processed by another combiner. 
In this case the thread will process any \texttt{Ready} advertisements on behalf of other threads, but will skip processing its own to avoid duplicate execution.

Combiner processing of an advertisement is described in Algorithm~\ref{lab:insert_critical}.
The combiner first checks if the cached predecessors and successor are still valid~(line~\ref{code:insert_critical_pred_check}). 
Performing this check under the exclusive combiner lock avoids any need for compare-and-swap instructions when updating the index later in the \phaseAttach phase, as would be required in a regular lock-free skiplist implementation.
If the cached data is invalid the combiner sets the advertisement's status to \texttt{Failed} and returns. 


\begin{algorithm}[!htbp]
	\caption{Combiner processing of a \Op{Put} operation\label{lab:insert_critical}}
\begin{algorithmic}[1]
\Function{\ProcessPut}{adv, \inBatch}
\State new\_index $\leftarrow$ adv.\PAinode
\State new\_pmem $\leftarrow$ new\_index.\INbaseNode
\State succ\_index $\leftarrow$ new\_index.succs[0]
\State pred\_index $\leftarrow$ adv.\PApreds[0]
\If{pred\_index.succs[0] = succ\_index} \label{code:insert_critical_pred_check}
    \State final\_version $\leftarrow$ \visibleSeq + 1\label{code:index_seq_read}
    \State new\_index.version $\leftarrow$ final\_version\label{code:index_seq_assignment}
    \State new\_pmem.version $\leftarrow$ final\_version\label{code:pmem_seq_assignment}
		\State \Optt{Flush\_range}(\&new\_pmem.version, \texttt{UINT\_SIZE})\label{code:pmem_flush_seq_id}
    \State pred\_index.succs[0] $\leftarrow$ new\_index \label{code:index_node_insert}
    \State pred\_pmem $\leftarrow$ pred\_index.\INbaseNode
    \State pred\_pmem.succ $\leftarrow$ new\_pmem \label{code:pmem_node_insert}
		\State \Optt{Persist\_range}(\&pred\_pmem.succ, \texttt{PTR\_SIZE}) \label{code:pmem_flush_pred_next}
    \If{not \inBatch} \label{code:skip_seq_id_increment}
        \State \visibleSeq $\leftarrow$ final\_version\label{code:increment_seq_id}
    \EndIf
		\State adv.status $\leftarrow$ \texttt{Success} 
\Else
		\State adv.status $\leftarrow$ \texttt{Failed} 
\EndIf
\EndFunction
\end{algorithmic}
\end{algorithm}

If the cached data is valid, the combiner assigns a version number greater than the globally visible version number to the index node and NVM node (Algorithm~\ref{lab:insert_critical}, lines~\ref{code:index_seq_assignment}-\ref{code:pmem_seq_assignment}) and flushes the p\_node's version number (line~\ref{code:pmem_flush_seq_id}). 
The combiner then links the index node to its predecessor and the durable node to its predecessor (lines~\ref{code:index_node_insert}-\ref{code:pmem_node_insert}). This makes them visible to threads that are executing \phaseLocate, but not yet to threads that are executing read-only operations.

The successor pointer of the NVM node's predecessor is then persisted using \persistr (line~\ref{code:pmem_flush_pred_next}). At this point the persistent memory node is durably linked to the linked list in persistent memory, and the index node is inserted into the volatile index. Note that in addition to the \texttt{SFENCE} executed within \persistr, a naive solution would perform an additional \texttt{SFENCE} after the earlier \flushr (line~\ref{code:pmem_flush_seq_id}) to ensure the correct node version number is persisted before the node is durably linked. However, our recovery algorithm allows us to ensure correctness using only a single \texttt{SFENCE} (see \cref{sec:recovery}), substantially improving the combiner's performance.

The globally visible version number is now safely incremented and the new node becomes visible to read-only operations~(line~\ref{code:increment_seq_id}).
Finally, the combiner sets the advertisement's status to \texttt{Success} and returns.  After releasing its lock, the combiner executes an \texttt{MFENCE} to ensure its own operation has been flushed to NVM (Algorithm \ref{algo:add_outer_shell}, line~\ref{code:mfence}).

\myparr{Promote:} While the combiner continues processing any remaining ready entries in the advertisements array, a waiting worker that observes its advertisement's status is \texttt{Success} immediately proceeds from the \phaseAttach phase to the \phasePromote phase~(\cref{algo:add_outer_shell}, line~\ref{code:promote}) (or conversely returns to the \phasePrepare phase if it is \texttt{Failed}).

During \phasePromote (not shown), the worker updates higher levels of the volatile index concurrently with other threads using a lock-free skiplist algorithm. The promotion inserts the new index node into levels in ascending order, which preserves the invariant that if the node exists in level \textit{i} then it also exists in every level below \textit{i}.

\subsection{\Op{WriteBatch} Operation}
\label{sec:wb} The \Op{WriteBatch} operation atomically and durably applies a
collection of \Op{put} and \Op{delete} operations~ using the same four-phase
execution framework as \Op{Put}. We next outline the key differences. 

\myparr{Locate and Prepare:} 
In contrast to \Op{Put}, \Op{WriteBatch} takes a vector of key-value pairs as an argument~(Algorithm~\ref{lab:insert_batch_outer_shell},~line~\ref{code:wb_params}). Worker threads therefore execute \phaseLocate and \phasePrepare for each individual pair, collecting the resulting prepared nodes into the set \texttt{batchAdvs} (lines~\ref{code:batch_advs_start}-\ref{code:batch_advs_end}). Unlike \Op{Put}, the worker thread never retries \phaseLocate and \phasePrepare for operations in a \Op{WriteBatch}. This avoids having to retry an entire batch for a single failed operation. Instead, the combiner retries failed operations (at most once) during the \phaseAttach phase. 

\myparr{Attach (Worker):} The worker thread nexts creates a flat-combining advertisement prior to entering the \phaseAttach phase (line~\ref{code:batch_adv}). A minor variation here from \Op{Put} is that a \Op{WriteBatch} advertisement contains a collection \texttt{batchAdvs} of prepared nodes, one for each operation, as well as a status field for the whole batch.

\myparr{Attach (Combiner):} The overall \phaseAttach phase flat-combining flow is the same as for \Op{Put}, except that for \Op{WriteBatch} advertisements the combiner executes \Op{Process\_WriteBatch} instead of \Op{Process\_Put}~(Algorithm~\ref{algo:add_outer_shell}, line~\ref{code:call_process_put}).  

Within \Op{Process\_WriteBatch}, the key challenge for the combiner is to
ensure crash-consistency for the batch as a whole (Algorithm~\ref{algo:process_wb}). For this it relies on two
persistent variables: \commitSeqIdt to record the version of the most recent
completed operation before the start of the batch, and a \useCommitEntryt flag
to record whether the combiner was processing a batch when
a crash occured. Before processing batch operations, the combiner first updates
\commitSeqIdt with the value of \visibleSeqt and persists
it~(lines~\ref{code:start_batch}-\ref{code:flush_commit_seq_id}). It then sets
and persists the \useCommitEntryt flag to switch to batch
mode~(lines~\ref{code:set_use_commit_entry}-\ref{code:flush_use_commit_entry}).
If upon recovery the system observes that the \useCommitEntryt flag is set, it
will roll back any nodes with a version higher than \commitSeqIdt. Conversely,
if the \useCommitEntryt flag is not set, it will recover all commited nodes.  

For each prepared node in the batch, the combiner invokes \Op{Process\_Put} to insert the
node into the data structure (lines~\ref{code:writrbatch_retry_start}-\ref{code:wb_end_for}).
All nodes in the batch are assigned the same version number in \Op{WriteBatch} because the
\visibleSeqt is not incremented for batch operations
(Algorithm~\ref{lab:insert_critical}, line~\ref{code:skip_seq_id_increment}).
Similarly to the \Op{Put} operation, a retry may be required if the cached
neighbours of the prepared node have become stale. However, in the case of
\Op{WriteBatch} the retry is executed within \phaseAttach to
ensure that all operations in the batch have completed. There can be at most one retry per entry, since \Op{Process\_Put} executes
under the exclusive combiner lock. 

Once all operations have executed, the combiner switches back to non-batch mode
(line~\ref{code:commit_batch}). At this point the batch is considered committed
and the combiner increments \linebreak[1]\visibleSeqt to make the batch's nodes visible
atomically to readers~(line~\ref{code:batch_seq_id_assignment}). 

\myparr{Promote:} Finally, after the worker observes the batch advertisement's status is
\texttt{Success}, it executes the \phasePromote phase for each operation in the
batch~(Algorithm~\ref{lab:insert_batch_outer_shell},
lines~\ref{code:wb_promote_for}-\ref{code:wb_promote_endfor}). 

\begin{algorithm}[htbp]
    \caption{\WriteBatch Operation\label{lab:insert_batch_outer_shell}}
    \begin{algorithmic}[1]
\Function{\WriteBatch}{kv\_pairs}\label{code:wb_params}
			\For{(key, value) in kv\_pairs} \label{code:batch_advs_start}
				\State (preds, succs) $\leftarrow$ \textcolor{blue}{\Op{locate}(key)} \label{code:batch_locate}
				\State i\_node $\leftarrow$ \textcolor{blue}{\Op{prepare}(key, value, succs, \texttt{NULL})} \label{code:batch_prepare}
			\State batchAdvs.append((i\_node, preds, \texttt{Ready}))\label{code:batch_package}
				\EndFor \label{code:batch_advs_end}
				\State \gAdvertisements[\LocalThreadId] $\leftarrow$ (batchAdvs, \texttt{Ready}) \label{code:batch_adv}
				\State \textcolor{blue}{\Op{attach}()}\label{code:write_batch_attach}
				\For{putAdv in batchAdvs} \label{code:wb_promote_for}
					 \State \textcolor{blue}{\Op{promote}(putAdv)}
				\EndFor \label{code:wb_promote_endfor}
\EndFunction
\end{algorithmic}
\end{algorithm}

\begin{algorithm}[htbp]
\caption{Combiner \Op{WriteBatch} processing \label{algo:process_wb}}
\begin{algorithmic}[1]
	\Function{\Op{Process\_WriteBatch}}{wbAdv}
\State \commitSeqId $\leftarrow$ \visibleSeq \label{code:start_batch}
	\State \Optt{persist\_range}(\&\commitSeqId, \texttt{UINT\_SIZE}) \label{code:flush_commit_seq_id} 
\State \useCommitEntry $\leftarrow$ true \label{code:set_use_commit_entry}
	\State \Optt{persist\_range}(\&\useCommitEntry, \texttt{BOOL\_SIZE}) \label{code:flush_use_commit_entry}
\For{putAdv in wbAdv.batchAdvs}
\While{true} \label{code:writrbatch_retry_start}
\State \Op{Process\_Put}(putAdv, true)
\If{putAdv.status = \texttt{Success}} \label{code:invoke_insert_critical}
\State \textbf{break};
\EndIf
	\State key $\leftarrow$ putAdv.i\_node.key
	\State (putAdv.preds, succs) $\leftarrow$ \textcolor{blue}{\Op{Locate}(key)}
	\State value $\leftarrow$ putAdv.i\_node.p\_node.value
	\State putAdv.i\_node $\leftarrow$ 
			 \State \quad \textcolor{blue}{\Op{prepare}(key, value, \PAsuccs, putAdv.i\_node)}

\EndWhile \label{code:writrbatch_retry_end}
	\EndFor \label{code:wb_end_for}

\State \useCommitEntry $\leftarrow$ false \label{code:commit_batch}
	\State \Optt{persist\_range}(\&\useCommitEntry, \texttt{BOOL\_SIZE}) \label{code:flush_use_commit_entry_end}
\State \visibleSeq $\leftarrow$ \commitSeqId + 1 \label{code:batch_seq_id_assignment}
\State wbAdv.status $\leftarrow$ \texttt{Success}
\EndFunction
\end{algorithmic}
\end{algorithm}

\subsection{\Op{Get} and \Op{Snapshot} Operations\label{sec:detailed_algo_finish}}
\Op{Get} operations follow a standard lock-free skiplist traversal algorithm, with one addition: they ignore nodes that are not yet visible to the calling thread. This visibility control is achieved by comparing each encountered node's version against the caller's active view based on the value of \texttt{visible\_version}.

The \Op{Get} operation begins by caching the current visible version on (Algorithm~\ref{algo:get_operation}, line~\ref{code:get_op_sid_cache}). It then traverses from the head node down the levels of the skiplist to locate the target position of \lookupKey (lines~\ref{code:get_op_traverse}-\ref{code:end_get_op_traverse}). During this traversal, the inner loop advances along the current level as long as the version-aware key comparator \Op{Cmp} determines that the node's key is strictly less than the target key given the cached visible version (\texttt{vv}) (lines~\ref{code:get_op_compare}-\ref{code:end_get_op_compare}).

When the lock-free traversal finishes, the algorithm evaluates whether the target key was found. First, if the successor node's key matches the target key (line~\ref{code:check_match}), it returns the associated value. Otherwise it returns \texttt{NULL}.
\begin{algorithm}[!htbp]
\caption{\label{algo:get_operation}Get Operation}
\begin{algorithmic}[1]

\Function{\Op{get}}{\lookupKey}
    \State vv $\leftarrow$ visible\_version\label{code:get_op_sid_cache}
    \State pred $\leftarrow$ head
		\For{i $\leftarrow$ \texttt{MAX\_HEIGHT} - 1 \textbf{downto} 0} \label{code:get_op_traverse}
        \State curr $\leftarrow$ pred.succs[i]
        \While{curr $\neq$ tail \textbf{and} \label{code:get_op_compare}
				\State \qquad \Op{Cmp}(curr.key, \lookupKey, vv) < 0} 
            \State pred $\leftarrow$ curr
            \State curr $\leftarrow$ pred.succs[i]
				\EndWhile \label{code:end_get_op_compare}
    \EndFor \label{code:end_get_op_traverse}
    \If{curr.succs[0].key = \lookupKey} \label{code:check_match}
        \State \Return curr.p\_node.value
    \Else
				\State \Return \texttt{NULL} 
    \EndIf
\EndFunction

\end{algorithmic}
\end{algorithm}

The \Op{Snapshot} operation proceeds as follows. The thread first caches the \texttt{visible\_version} and then traverses the skip list until it finds the first node whose key is greater than or equal to the snapshot’s start key and whose version is less than or equal to the cached \texttt{visible\_version}, or until it reaches the tail, in which case it returns \texttt{NULL}.
If the specified range is present in the skip list, the operation returns an iterator object to the caller, which allows traversal of the skip list from that point onward. Each call to the iterator’s \texttt{next()} method returns the node currently pointed to by the iterator and advances the iterator to the next valid node.
A node is considered valid if its key is greater than the current node’s key and its version is less than or equal to the cached \texttt{visible\_version}. If no such node exists, the iterator will set the return value for the next invocation to \texttt{NULL}.

\section{Recovery Procedure}
\label{sec:recovery}

In the event of a crash, the DRAM index and visible version number are lost, while a durable sorted linked list remains on persistent memory. Two scenarios are possible during recovery:

\begin{myenumerate}
	\item \textbf{No batch in progress.} If the \texttt{batch\_mode} flag is set
		to \textit{false}, recovery simply collects all nodes that were present in
		the list, except potentially for a single \texttt{NVMNode} whose version is
		equal to \texttt{MAX\_UINT}, indicating that the node has not yet been
		fully committed and was not observable before the crash event.

    \item \textbf{Batch in progress.} Otherwise, recovery checks the version number of the last committed operation and excludes nodes with a higher version from the recovery process.
\end{myenumerate}

The list can then be traversed and used to re-initialise the visible version number and to rebuild the in-memory index according to the surrounding system implementation. For example, this may involve reconstructing a memtable in \prdbX or inserting nodes into the L0 skiplist in the ListDB implementation.

From a performance perspective, the key difference between recovery in \sys and other implementations is that the persistent list is already sorted. This property enables faster reconstruction of the skiplist index during recovery, as we demonstrate in our experimental evaluation (\cref{sec:recovery_time_evaluaiton}).

\section{Correctness of \sysrob}
\label{sec:correctness}
In this section, we give a brief overview of the proof that \sys satisfies 
durable linearizability. 
A complete proof can be found in Appendix~\ref{sec:dl-proof}.

We represent key/value pairs stored in \sys as a collection of
\mbox{read/write} variables, and model the update operations (\Op{Put}, \Op{WriteBatch}, 
and \Op{Delete}) as \emph{atomic writes} of values to one or more keys, and
the read operations (\Op{Get} and \Op{Snapshot}) as \emph{atomic reads}
from one or more keys. 

Our proof strategy follows
the \emph{dependency graph} approach of~\cite{adya99-weak-consis}, 
which asserts that a history $\sigma$ of operations is linearizable 
provided a directed graph induced by the union of four relations -- $\WR$,
$\WW$, $\RW$, and $\RT$ -- is acyclic (Theorem~\ref{thm:linearizability}). 
Informally, $\WR$ (reads-from) identifies pairs $r$ and $w$ of read and write operations such that 
$r$ returns the values written by $w$, $\WW$ (write-write)
establishes a total order over all write operations in $\sigma$, 
and $\RT$ represents the real-time invocation order between pairs of
operations in $\sigma$. The relation $\RW$ (from-read) is derived
from $\WR$ and $\WW$ (Definition~\ref{dep-graph-def}). 

To accommodate the operations that may remain incomplete due to crashes,
we introduce a \emph{visibility} predicate~\cite{artyom} such that
a read operation $r$ is visible iff the thread that invoked $r$
does not crash before $r$ returns; and 
a write operation $w$
is visible iff the thread that invoked 
$w$ does not crash before the assignment in line~\ref{code:pmem_node_insert} 
(if $w =$ \Op{Put}) or line~\ref{code:commit_batch}
(if $w =$ \Op{WriteBatch}) reaches persistence.
We then restrict the four relations
above to hold only for visible operations (Definition~\ref{dep-graph-def}),
and revise the linearizability criterion as in~\cite{artyom} to assert
durable linearizability (Theorem~\ref{thm:linearizability}). 

In the remainder of the proof, we show how the witnesses for the relations
$\WR$ and $\WW$ can be constructed for a given history of \sys 
(Definition~\ref{def:tau-tag}),
and prove that the resulting dependency graph is acyclic 
(\cref{thm:dep-graph-acyc}). 
The durable linearizability of \sys then follows from 
Theorem~\ref{thm:linearizability}.

\section{Evaluation}\label{sec:performance}
In this section, we evaluate \sys's performance to show the benefits of its
concurrency control and persistence mechanisms. Our evaluation goals are
threefold. First, we investigate the throughput and latency of \sys's
durable skiplist index in isolation to understand its behaviour in a standalone
deployment (Section~\ref{sec:isolated_evaluation}). We then integrate \sys
with the capacity tiers of existing key-value stores and evaluate its ability
to boost their end-to-end performance (Section~\ref{sec:end_to_end}).  Finally,
we evaluate the time taken by \sys for crash recovery
(Section~\ref{sec:recovery_time_evaluaiton}).

\subsection{Experimental Setup\label{sec:experimental_setup}}
We conduct all experiments on a dual-socket server equipped with two Intel Xeon Gold 6326 processors (16 cores per socket, 2 hardware threads per core, 2.90 GHz base/3.50 GHz boost) and 128 GB of DRAM. Persistent memory is provided by a 496 GiB Intel Optane PMEM module mounted in fsdax mode on the first NUMA node. All benchmarks are pinned to a single NUMA node (\texttt{numactl --cpunodebind=1 --membind=1}) to eliminate cross-socket effects. The system runs Ubuntu 22.04 with Linux kernel 6.8.0, and all implementations are compiled with GCC 11.4 at (\texttt{-O3 -mavx2 -msse4.2 -mclflushopt -mclwb}). 
Our evaluation uses the standard benchmarks provided by the RocksDB \texttt{db\_bench} tool. Unless otherwise specified, all experiments are conducted using the \texttt{fillrandom} workload.

\begin{figure*}[ht]
  \centering
  \begin{subfigure}[t]{0.31\textwidth}
    \centering
    \begin{tikzpicture}
    \begin{axis}[
      bench style,
      xlabel={Thread count},
      ylabel={Throughput (Mops/s)},
      ymin=0.0,
      legend style={
      at={(1,0)},
      anchor=south east,
    }
    ]
    \addplot[color=veliriscolor, dashed, mark=square*]
      coordinates{(2,1.259) (4,1.768) (6,2.113) (8,2.435) (10,2.715) (12,2.972) (14,3.180) (16,3.302) (18,3.427) (20,3.546) (22,3.601) (24,3.645) (26,3.678) (28,3.722) (30,3.748) (32,3.759)};
    \addlegendentry{\sys}

    \addplot[color=listdbcolor, solid, mark=*]
      coordinates{(2,1.530) (4,2.096) (6,2.302) (8,2.359) (10,2.305) (12,2.244) (14,2.197) (16,2.183) (18,2.186) (20,2.248) (22,2.251) (24,2.249) (26,2.249) (28,2.241) (30,2.226) (32,2.210)};
    \addlegendentry{ListDB}

    \addplot[color=groupcommitcolor, dash dot, mark=triangle*]
      coordinates{(2,1.094) (4,1.547) (6,1.944) (8,2.280) (10,2.551) (12,2.730) (14,2.859) (16,2.907) (18,3.001) (20,3.020) (22,3.090) (24,3.091) (26,3.077) (28,3.063) (30,3.018) (32,2.991)};
    \addlegendentry{PmemRocksDB}
    \end{axis}
    \end{tikzpicture}
    \caption{num\_ops\,=\,300K\label{fig:bench_throughput_1}}
  \end{subfigure}
  \hfill
  \begin{subfigure}[t]{0.31\textwidth}
    \centering
    \begin{tikzpicture}
    \begin{axis}[
      bench style,
      xlabel={Thread count},
      ylabel={Throughput (Mops/s)},
      ymin=0.0,
      legend entries={},
    ]

    \addplot[color=veliriscolor, dashed, mark=square*]
      coordinates{(2,1.021) (4,1.561) (6,1.956) (8,2.274) (10,2.521) (12,2.774) (14,3.023) (16,3.201) (18,3.349) (20,3.471) (22,3.553) (24,3.612) (26,3.655) (28,3.675) (30,3.731) (32,3.750)};

    \addplot[color=listdbcolor, solid, mark=*]
      coordinates{(2,1.274) (4,1.901) (6,2.220) (8,2.301) (10,2.289) (12,2.252) (14,2.194) (16,2.177) (18,2.152) (20,2.244) (22,2.262) (24,2.251) (26,2.249) (28,2.243) (30,2.233) (32,2.217)};

    \addplot[color=groupcommitcolor, dash dot, mark=triangle*]
      coordinates{(2,0.844) (4,1.335) (6,1.731) (8,2.072) (10,2.341) (12,2.540) (14,2.702) (16,2.819) (18,2.883) (20,2.956) (22,3.005) (24,3.055) (26,3.063) (28,3.042) (30,3.062) (32,3.025)};
    \end{axis}
    \end{tikzpicture}
    \caption{num\_ops\,=\,1M\label{fig:bench_throughput_2}}
  \end{subfigure}
  \hfill
  \begin{subfigure}[t]{0.31\textwidth}
    \centering
    \begin{tikzpicture}
    \begin{axis}[
      bench style,
      xlabel={Thread count},
      ylabel={Throughput (Mops/s)},
      ymin=0.0,
    ]

    \addplot[color=veliriscolor, dashed, mark=square*]
      coordinates{(2,0.877) (4,1.431) (6,1.824) (8,2.148) (10,2.403) (12,2.632) (14,2.889) (16,3.087) (18,3.253) (20,3.401) (22,3.486) (24,3.579) (26,3.622) (28,3.661) (30,3.694) (32,3.724)};

    \addplot[color=listdbcolor, solid, mark=*]
      coordinates{(2,1.100) (4,1.729) (6,2.088) (8,2.276) (10,2.303) (12,2.261) (14,2.213) (16,2.186) (18,2.135) (20,2.224) (22,2.244) (24,2.241) (26,2.245) (28,2.239) (30,2.233) (32,2.218)};

    \addplot[color=groupcommitcolor, dash dot, mark=triangle*]
      coordinates{(2,0.697) (4,1.164) (6,1.539) (8,1.871) (10,2.143) (12,2.362) (14,2.526) (16,2.660) (18,2.725) (20,2.819) (22,2.907) (24,2.900) (26,2.970) (28,3.001) (30,3.032) (32,3.038)};
  
    \end{axis}
    \end{tikzpicture}
    \caption{num\_ops\,=\,2M\label{fig:bench_throughput_3}}
  \end{subfigure}
	\caption{Throughput \vs \# Cores for \texttt{db\_bench} fillrandom benchmark (value size 64B).}
\end{figure*}

\begin{figure*}[ht]
  \centering
  \begin{subfigure}[t]{0.31\textwidth}
    \centering
    \begin{tikzpicture}
    \begin{axis}[
      bench style,
      xlabel={Thread count},
      ylabel={Latency (ns)},
      ymin=0.0,
      ymax=20000,
    ]

    \addplot[color=veliriscolor, dashed, mark=square*]
      coordinates{(2,1898.54) (4,2427.48) (6,2859.19) (8,3177.10) (10,3438.32) (12,3623.53) (14,3759.46) (16,3954.39) (18,4322.14) (20,4712.48) (22,5163.78) (24,5515.62) (26,6000.38) (28,6452.74) (30,6774.96) (32,7393.36)};
    \addlegendentry{\sys}

    \addplot[color=listdbcolor, solid, mark=*]
      coordinates{(2,1563.08) (4,1935.99) (6,2306.18) (8,2848.04) (10,3580.43) (12,4289.22) (14,4912.11) (16,5520.61) (18,5220.13) (20,5054.94) (22,5543.85) (24,6141.09) (26,6832.20) (28,7615.11) (30,8384.73) (32,9390.92)};
    \addlegendentry{ListDB}
    
    \addplot[color=groupcommitcolor, dash dot, mark=triangle*]
      coordinates{(2,2255.00) (4,2846.49) (6,3248.56) (8,3596.84) (10,3922.66) (12,4272.21) (14,4692.08) (16,5146.29) (18,5640.41) (20,6115.89) (22,6670.98) (24,7112.66) (26,7669.50) (28,8095.57) (30,8624.13) (32,9102.79)};
    \addlegendentry{PmemRocksDB}
    \end{axis}
    \end{tikzpicture}
    \caption{Median Latency (P50)\label{fig:bench2_p50}}
  \end{subfigure}
  \hfill
  \begin{subfigure}[t]{0.31\textwidth}
    \centering
    \begin{tikzpicture}
    \begin{axis}[
      bench style,
      xlabel={Thread count},
      ylabel={Latency (ns)},
      ymin=0.0,
      ymax=20000,
    ]

    \addplot[color=veliriscolor, dashed, mark=square*]
      coordinates{(2,3988.70) (4,4883.40) (6,5784.78) (8,6497.57) (10,7180.06) (12,7927.95) (14,8753.10) (16,9404.40) (18,9924.10) (20,10703.59) (22,11560.61) (24,11897.00) (26,12758.30) (28,13578.67) (30,13896.52) (32,15009.63)};

    \addplot[color=listdbcolor, solid, mark=*]
      coordinates{(2,4078.82) (4,5381.86) (6,5999.51) (8,7832.86) (10,11622.33) (12,16458.97) (14,21901.92) (16,26698.63) (18,36188.96) (20,42835.80) (22,47621.41) (24,52008.40) (26,56141.19) (28,59637.74) (30,65988.41) (32,73004.35)};

    \addplot[color=groupcommitcolor, dash dot, mark=triangle*]
      coordinates{(2,4727.73) (4,5735.72) (6,6312.92) (8,6924.62) (10,7709.88) (12,8326.42) (14,8969.84) (16,9800.55) (18,11300.06) (20,12108.91) (22,13569.92) (24,14580.59) (26,15839.70) (28,17429.28) (30,18475.08) (32,19799.73)};
    
    \end{axis}
    \end{tikzpicture}
    \caption{99th-pct.\ Latency (P99)\label{fig:bench_2_p99}}
  \end{subfigure}
  \hfill
  \begin{subfigure}[t]{0.31\textwidth}
    \centering
    \begin{tikzpicture}
    \begin{axis}[
      bench style,
      xlabel={Thread count},
      ylabel={Tail amplification (P99\,/\,P50)},
      ymin=0.0,
    ]
    \addplot[color=veliriscolor, dashed, mark=square*]
      coordinates{(2,2.10) (4,2.01) (6,2.02) (8,2.05) (10,2.09) (12,2.19) (14,2.33) (16,2.38) (18,2.30) (20,2.27) (22,2.24) (24,2.16) (26,2.13) (28,2.10) (30,2.05) (32,2.03)};
    
    \addplot[color=listdbcolor, solid, mark=*]
      coordinates{(2,2.61) (4,2.78) (6,2.60) (8,2.75) (10,3.25) (12,3.84) (14,4.46) (16,4.84) (18,6.93) (20,8.47) (22,8.59) (24,8.47) (26,8.22) (28,7.83) (30,7.87) (32,7.77)};

    \addplot[color=groupcommitcolor, dash dot, mark=triangle*]
      coordinates{(2,2.10) (4,2.02) (6,1.94) (8,1.93) (10,1.97) (12,1.95) (14,1.91) (16,1.90) (18,2.00) (20,1.98) (22,2.03) (24,2.05) (26,2.07) (28,2.15) (30,2.14) (32,2.18)};
    
    \end{axis}
    \end{tikzpicture}
    \caption{Tail amplification (P99/P50)\label{fig:bench2_tail}}
  \end{subfigure}
	\caption{Latency \vs \# Cores for \texttt{db\_bench} fillrandom benchmark (300k operations, value size 64B).}
\end{figure*}

\medskip
\begin{figure*}[ht]
  \centering
  \begin{subfigure}[t]{0.48\textwidth}
    \centering
    \begin{tikzpicture}
    \begin{axis}[
      bench style,
      xlabel={Payload size (bytes)},
      ylabel={Throughput (Mops/s)},
      ymin=0.0,
      legend style={
      at={(0,0)},
      anchor=south west,
    }
    ]
    \addplot[color=veliriscolor, dashed, mark=square*]
      coordinates{(128,3.195) (256,2.951) (384,2.755) (512,2.580) (640,2.427) (768,2.306) (896,2.152) (1024,2.012) (1152,1.913) (1280,1.795) (1408,1.684) (1536,1.596) (1664,1.505) (1792,1.435) (1920,1.347) (2048,1.265)};
    \addlegendentry{\sys}
    \addplot[color=listdbcolor, solid, mark=*]
      coordinates{(128,2.130) (256,1.966) (384,1.842) (512,1.788) (640,1.728) (768,1.683) (896,1.593) (1024,1.543) (1152,1.520) (1280,1.464) (1408,1.373) (1536,1.380) (1664,1.290) (1792,1.265) (1920,1.216) (2048,1.157)};
    \addlegendentry{ListDB}
    
    \addplot[color=groupcommitcolor, dash dot, mark=triangle*]
      coordinates{(128,2.768) (256,2.498) (384,2.320) (512,2.167) (640,2.029) (768,1.880) (896,1.793) (1024,1.726) (1152,1.643) (1280,1.570) (1408,1.509) (1536,1.436) (1664,1.371) (1792,1.312) (1920,1.218) (2048,1.161)};
    \addlegendentry{PmemRocksDB}
    \end{axis}
    \end{tikzpicture}
    \caption{Threads\,=\,16\label{fig:bench3_1}}
  \end{subfigure}
  \hfill
  \begin{subfigure}[t]{0.48\textwidth}
    \centering
    \begin{tikzpicture}
    \begin{axis}[
      bench style,
      xlabel={Payload size (bytes)},
      ylabel={Throughput (Mops/s)},
      ymin=0.0,
    ]

    \addplot[color=veliriscolor, dashed, mark=square*]
      coordinates{(128,3.600) (256,3.316) (384,3.067) (512,2.848) (640,2.618) (768,2.457) (896,2.278) (1024,2.120) (1152,1.979) (1280,1.859) (1408,1.688) (1536,1.607) (1664,1.516) (1792,1.382) (1920,1.340) (2048,1.270)};

    \addplot[color=listdbcolor, solid, mark=*]
      coordinates{(128,2.133) (256,1.942) (384,1.824) (512,1.783) (640,1.718) (768,1.663) (896,1.599) (1024,1.551) (1152,1.503) (1280,1.447) (1408,1.393) (1536,1.340) (1664,1.294) (1792,1.231) (1920,1.189) (2048,1.132)};

    \addplot[color=groupcommitcolor, dash dot, mark=triangle*]
      coordinates{(128,2.787) (256,2.487) (384,2.293) (512,2.118) (640,2.005) (768,1.786) (896,1.639) (1024,1.581) (1152,1.492) (1280,1.417) (1408,1.345) (1536,1.285) (1664,1.227) (1792,1.185) (1920,1.119) (2048,1.077)};
    
    \end{axis}
    \end{tikzpicture}
    \caption{Threads\,=\,32\label{fig:bench3_2}}
  \end{subfigure}
	\caption{Throughput \vs Payload Size (250k operations)}
\end{figure*}

\subsection{Durable Skiplist Performance}
\label{sec:isolated_evaluation}
We begin by evaluating the performance of \sys's durable skiplist in isolation (without a capacity tier). As well as preventing interference from background operations that propagate data to the capacity tier, this allows us to ensure that all evaluated implementations use the same memory allocators, node structures, key comparators, node height generators, fanout factors, and other parameters that affect performance.

To put \sys's performance in perspective, we compare against the staging tiers of two representative state-of-the-art systems: \prdb\cite{pmemrocksdb} and ListDB~\cite{listdb}. Before presenting our results, we first briefly describe the experimental configuration for each baseline system, and in particular how we address a critical correctness bug that we discovered in ListDB.

\mypar{\prdb setup} Our comparison with \prdb's staging tier uses the following configuration: we enable the \textit{concurrent memtable writes} and \textit{pipelined write} options. These options allow \prdb to parallelise WAL and mem\-table updates across successive write batches, significantly improving its performance. We also configure \prdb to use \textit{key-value separation}, storing payload values in persistent memory in the same manner as the ListDB and \sys staging-tier implementations evaluated in this work.

\mypar{ListDB setup} During our initial investigations into ListDB's staging tier, we discovered two bugs that can cause it to violate the durable linearizability correctness condition. The essence of the first bug is the chain of records in ListDB's WAL may be persisted out-of-order, such that an entry after an invalid record can become visible to clients but is lost during recovery (see Appendix~\ref{appendix:listdb_bug} for details). The second bug arises from a missing flush operation.

To address the first issue, we designed a fix to ListDB's algorithm that preserves its lock-free properties and ensures that the WAL chain can skip over invalid records, thereby ensuring durable linearizability. For the second issue we add the required flush. We call this fixed version ListDB(sync) and it is the version that we use in our benchmark evaluation.

\mypar{Throughput} Figures~\ref{fig:bench_throughput_1}, \ref{fig:bench_throughput_2}, and~\ref{fig:bench_throughput_3} show the insertion throughput into the WAL--memtable stack across varied thread counts
for \througputFirstSet, \througputSecondSet, and \througputThirdSet operations, respectively, under a fixed payload size of \fixedPayloadSizeForTheFirstBenchSet.
\sys consistently outperforms \prdb, which implements the same extended API, by 15\%-25\% across the entire range of thread counts. Moreover, the performance gap widens as the number of threads increases.
Additionally, \sys begins to outperform ListDB(sync) at around 10 threads, achieving a throughput improvement of 4\% to 68\%, with the performance gap continuing to widen as the thread count increases.

In a separate set of experiments, shown in Figures~\ref{fig:bench3_1} and~\ref{fig:bench3_2}, we vary the payload size while executing \fixedNumOpsForTheThirdBenchSet operations using 16 and 32 threads. \sys significantly outperforms the other systems for small payloads, but the gap narrows as payload size increases. By 2048~bytes, performance largely converges, as persistent-memory bandwidth becomes the dominant limiting factor for all 3 systems.
At 32 threads, \sys outperforms \prdb by 18\%--29\% and ListDB by 12\%--69\%, and at 16 threads by 15\%--8\% and 50\%--9\% respectively.

Finally, we evaluate the performance of \Op{WriteBatch} operations for \sys and \prdb, as shown in Figure~\ref{fig:bench_write_batch}. \sys consistently outperforms \prdb by 10\%-15\% across the entire range of batch sizes. Similar to single-update operations, \sys achieves this performance advantage by parallelising the \phaseLocate, \phasePrepare, and \phasePromote phases.
\begin{figure}

\begin{tikzpicture}
\begin{axis}[
    bench style,
    xlabel={Batch size (ops)},
    ylabel={Throughput (Mops/sec)},
    ymin=0.0,
          legend style={
      at={(0,0)},
      anchor=south west,
    }
]

    \addplot[
color=veliriscolor, dashed, mark=square*
    ] coordinates {
        (2,3.8296) (4,4.4565) (6,4.7747) (8,4.838) (10,4.9075) (12,4.9616) (14,4.9342) (16,4.8735) (18,4.9894) (20,4.9554) (22,4.9789) (24,5.043) (26,5.0586) (28,4.9223) (30,4.9428) (32,4.984) (34,5.001) (36,4.9423) (38,4.9101) (40,4.9566) (42,4.9072) (44,4.868) (46,4.8945) (48,4.8758) (50,4.9314)
    };
    \addlegendentry{\sys}

    \addplot[
        color=groupcommitcolor, dash dot, mark=triangle*
    ] coordinates {
        (2,3.4554) (4,3.7366) (6,3.9341) (8,4.0293) (10,4.0767) (12,4.115) (14,4.1501) (16,4.194) (18,4.2482) (20,4.2773) (22,4.2958) (24,4.269) (26,4.2889) (28,4.32) (30,4.3412) (32,4.3402) (34,4.2729) (36,4.2774) (38,4.2819) (40,4.2773) (42,4.3146) (44,4.2905) (46,4.3175) (48,4.2382) (50,4.2945)
    };
    \addlegendentry{\pmemRocksDB}

\end{axis}
\end{tikzpicture}
\caption{WriteBatch: Throughput \vs Batch Size\label{fig:bench_write_batch}}
\end{figure}

\mypar{Latency\label{sec:algo_vacuum_bench_latency}}
Figures~\ref{fig:bench2_p50} and~\ref{fig:bench_2_p99} show the median (P50) and 99th-percentile (P99) latencies, respectively, while Figure~\ref{fig:bench2_tail} shows the tail amplification (P99/P50) for the same experiments.
For this set of experiments we run \througputFirstSet operations with the payload size of \fixedPayloadSizeForTheFirstBenchSet, which corresponds to the same configuration as in Figure~\ref{fig:bench_throughput_1}.
As with the experiment in Figure~\ref{fig:bench_throughput_1}, \sys has lower latency than \prdb across the entire range of thread counts, with the performance advantage widening as the number of threads increases.
The ListDB(sync) implementation exhibits lower latency than \sys at low thread counts, but its latency increases sharply as the number of threads grows, eventually surpassing \sys at around 10 threads. This is consistent with the throughput results, where \sys begins to outperform ListDB(sync) at similar thread counts.

\subsection{End-to-End Performance\label{sec:end_to_end}}
To understand the impact of \sys on end-to-end performance and the engineering effort required to integrate its durable skiplist into existing systems, we next evaluate \sys in conjunction with the capacity tiers of ListDB and \prdb.

\myparr{\sysplain+ListDB.} In its original form, ListDB only supports \PUT and \GET operations. Supporting the full API of \sys requires us to address the lack of support for range scans: because ListDB's data is sharded using a hash-based partitioning scheme, a global range scan would in practice require scanning all shards, which is highly inefficient. However, extending ListDB to support range scans within a single shard is straightforward. We therefore configure ListDB to use a single shard and implement a straightforward range scan API that iterates over both the durable staging tier and capacity tier, returning all keys within the specified bounds.

We use this range-scan-enabled configuration as our baseline ListDB implementation for the integration study.
For the integrated {\sc \sysplain}+ListDB system, we replace ListDB's original staging tier with \sys, while keeping the rest of the shardless ListDB baseline unchanged. 
As a result, the integrated system supports the extended API, which includes \Op{WriteBatch} and \Op{Snapshot}.

\myparr{\sysplain+\prdb.} Unlike ListDB, \prdb already supports \sys's extended API (including \Op{WriteBatch} and \Op{Snapshot}), so integrating with it does not require modifying the system's core codebase. Instead, we simply replace the existing durable staging tier with \sys, leaving the remainder of \prdb untouched.

\myparr{Throughput.} 
Figures \ref{fig:veliris_with_list_db_end_to_end} and \ref{fig:pmemrocksdb_veliris_throughput} show the end-to-end throughput of {\sc \sysplain}+ListDB and {\sc \sysplain}+\prdb, respectively, in comparison to their original versions. For ListDB, performance exhibits a trend similar to that in \cref{sec:isolated_evaluation}, with \sys outperforming ListDB(sync) at higher thread counts by by 8\%--75\%.
We attribute the performance drop when scaling from 6 to 8 threads for both systems to ListDB's memtable rotation mechanism. This relies on reference counting to determine when the active memtable can be safely made immutable. Combined with the synchronisation overhead for allocating DRAM and NVM space for new memtable entries, this temporarily degrades performance.

With respect to \prdb (Figure \ref{fig:pmemrocksdb_veliris_throughput}), \sys's end-to-end throughput also demonstrates performance improvment of 49\%--73\% across the evaluated configurations. 


\begin{figure}[ht]

\begin{tikzpicture}
\begin{axis}[
    bench style,
    xlabel={Number of Threads},
    ylabel={Throughput (ops/sec)},
    ymin=0.0,
    legend style={
      at={(0,0)},
      anchor=south west,
    }
]
\addplot [veliriscolor, dashed, mark=square*] coordinates {
    (2, 1312661)
    (4, 1594568)
    (6, 1795437)
    (8, 1499502)
    (10, 1657758)
    (12, 1764370)
    (14, 1862380)
    (16, 1957828)
    (18, 2027558)
    (20, 2152996)
    (22, 2271899)
    (24, 2332754)
    (26, 2390303)
    (28, 2478479)
};
\addlegendentry{\sys}
\addplot [color=listdbcolor, solid, mark=*] coordinates {
    (2, 1335161)
    (4, 1630018)
    (6, 1655288)
    (8, 1393116)
    (10, 1400562)
    (12, 1384562)
    (14, 1381683)
    (16, 1352883)
    (18, 1445615)
    (20, 1464133)
    (22, 1456814)
    (24, 1422970)
    (26, 1415813)
    (28, 1414263)
};
\addlegendentry{ListDB}

\end{axis}
\end{tikzpicture}
	\caption{End-to-End Throughput: ListDB vs. \sys using \texttt{db\_bench} fillrandom benchmark (1M operations, 128 byte payload).\label{fig:veliris_with_list_db_end_to_end}}
\end{figure}

\begin{figure}[ht]
\centering
\begin{tikzpicture}
\begin{axis}[
    bench style,
    xlabel={Thread count},
    ylabel={Throughput (Mops/sec)},
    ymin=0.0,
    legend style={
      at={(1,0)},
      anchor=south east,
    }
]
\addplot[
    color=veliriscolor, dashed, mark=square*
    ]
    coordinates {
    (2, 880864) (4, 1303461) (6, 1640814) (8, 1889337) (10, 2062132) (12, 2250874) (14, 2365221) (16, 2428622) (18, 2476804) (20, 2505436) (22, 2546987) (24, 2530560) (26, 2551909) (28, 2549985) (30, 2542886)
    };
    \addlegendentry{\sys}

\addplot[
color=groupcommitcolor, dash dot, mark=triangle*
    ]
    coordinates {
    (2, 589297) (4, 763093) (6, 977010) (8, 1101797) (10, 1201045) (12, 1279130) (14, 1333809) (16, 1375141) (18, 1409832) (20, 1449052) (22, 1468040) (24, 1474943) (26, 1506354) (28, 1468751) (30, 1464867)
    };
    \addlegendentry{PmemRocksDB}

\end{axis}
\end{tikzpicture}
\vspace{0.5cm}
	\caption{End-to-End Throughput: PmemRocksDB vs. \sys using \texttt{db\_bench} fillrandom benchmark (1M operations, 128 byte payload).}
\label{fig:pmemrocksdb_veliris_throughput}
\end{figure}


\subsection{Recovery Performance\label{sec:recovery_time_evaluaiton}}
We next evaluate the recovery time of \sys, which consists of walking a singly-linked list of log records on \NVMM and inserting each key into a DRAM skip list. For \textit{\sys}, records are scattered at random physical offsets on \NVMM but linked in ascending key order (sorted). We compare against recovery using a standard write-ahead log (WAL) where records are physically contiguous (sequential scan, prefetcher-friendly) but linked in random key order (unsorted).

Each record occupies one 64-byte cache line (8-byte key, 8-byte next pointer, 48-byte padding), allowing 4 records to fit in a single 256B \NVMM page. We vary WAL size from 16 MiB to 1024 MiB and evict all records from cache before each timed run via \texttt{clflush}+\texttt{sfence}, ensuring every access incurs a true \NVMM read.

The results are shown in Figure~\ref{fig:recovery}. The sorted layout of \sys is faster across all sizes, with the advantage peaking at 128 MiB log sizes. At 16 MiB both layouts fit largely in the CPU cache and the gap is small (14.4\%). As the working set grows up to the size of the CPU cache the insertion cost dominates, with the benefits of \sys's sorted insertion peaking at 52.8\% at 128 MiB. At larger sizes the performance advantage of \sys's approach is reduced to to 7.9\% at 1024 MiB. This is due to repeated reads from \NVMM when a record is accessed after a cacheline containing a previously accessed record on the same \NVMM page was evicted. 
\begin{figure}
    
\centering
\resizebox{0.35\textwidth}{!}{
\begin{tikzpicture}
\begin{axis}[
    width  = 12cm,
    height = 7cm,
    xlabel = {Log size (MiB)},
    ylabel = {\sys Sorted-list speedup over unsorted-list (\%)},
    xmode  = log,
    log basis x = 2,
    xtick  = {16, 32, 64, 128, 256, 512, 768, 1024},
    xticklabels = {16, 32, 64, 128, 256, 512, 768, 1024},
    xticklabel style = {rotate=45, anchor=east},
    ymin   = 0,
    ymajorgrids = true,
    xmajorgrids = true,
    grid style  = {dotted, gray!50},
    mark size   = 2.5pt,
    line width  = 1.2pt,
    every axis plot/.append style = {color=blue!70!black},
]

\addplot[
    color  = blue!70!black,
    mark   = *,
    mark options = {fill=blue!70!black},
] coordinates {
        (16, 14.36)
        (32, 32.35)
        (64, 48.14)
        (128, 52.84)
        (256, 49.73)
        (512, 38.29)
        (768, 22.39)
        (1024, 7.86)
};

\node[above, font=\footnotesize] at (axis cs:16,14.36) {14.4\%};
\node[above, font=\footnotesize] at (axis cs:32,32.35) {32.4\%};
\node[above, font=\footnotesize] at (axis cs:64,48.14) {48.1\%};
\node[above, font=\footnotesize] at (axis cs:128,52.84) {52.8\%};
\node[above, font=\footnotesize] at (axis cs:256,49.73) {49.7\%};
\node[above, font=\footnotesize] at (axis cs:512,38.29) {38.3\%};
\node[above, font=\footnotesize] at (axis cs:768,22.39) {22.4\%};
\node[above, font=\footnotesize] at (axis cs:1024,7.86) {7.9\%};

\end{axis}
\end{tikzpicture}
}
\caption{Recovery performance: sorted vs.\ unsorted log. The sorted layout of \sys is faster across all sizes, with the advantage peaking at 128 MiB log size.}
\label{fig:recovery}
\end{figure}
\section{Related Work}
\label{sec:related-work}

\mypar{NVM KV Stores} ListDB~\cite{listdb} and \prdb~\cite{pmemrocksdb} are hybrid \NVMM KV stores that support multi-versioning, and are therefore a good match for integration with \sys. In contrast, the previously discussed systems in \cref{sec:hybrid-nvmm-dram}—BonsaiKV~\cite{bonsaikv}, FluidKV~\cite{fluidkv}, and Viper~\cite{viper}—do not provide multi-versioning support. Instead, they update entries in place, retaining only the most recent version of each key-value pair in the system.

\mypar{\NVMM Indexes} A large body of work has explored the design of \NVMM indexes, including B+ trees~\cite{fftree,wb+tree,nvtree,lbtree,nap,fptree} and hash-based indexes~\cite{dash}. 
In this work, we focus on skip lists because they are the de facto in-memory index for LSM-tree memtables. Memtables are transient data structures that grow from empty to a fixed size before being flushed to persistent storage and discarded. For this workload, skip lists provide efficient ordered access while avoiding the structural maintenance 
(e.g., rebalancing or node splitting) required by many tree-based indexes, making them a natural choice for write-intensive, short-lived in-memory data structures.

\mypar{\NVMM Skiplists} A number of prior works~\cite{pmem_skiplist,phast_pmem_skiplist,chen2019_pmem_skiplist,another_pmem_skiplist} have explored the design of \NVMM skip lists. However, these efforts primarily focus on optimizing individual operations and do not support atomic write batches or snapshots.
\textsc{NVTraverse}~\cite{DBLP:conf/pldi/FriedmanBWBP20} transforms lock-free data structures for persistent memory and introduces the concept of a \textit{traversal phase}, which inspired the design of \sys. However, like the other prior works discussed above, \textsc{NVTraverse} does not support atomic write batches or snapshots.

\mypar{Lock-free data structures with extended APIs} Recent work on lock-free data structures includes Jiffy~\cite{kobus2022jiffy}, which integrates atomic batch updates and snapshot semantics into a lock-free skiplist, and VERLIB~\cite{blelloch2024verlib}, which enables snapshot-consistent traversal of concurrent data structures via versioned pointers and per-pointer CAS-based updates. Both systems demonstrate high performance in volatile-memory settings, but neither addresses crash consistency or persistence semantics required for non-volatile memory systems.

\mypar{Flat-combining data structures} A body of work has explored flat-combining data structures for both volatile memory and \NVMM. These efforts have primarily focused 
on data structures with little or no traversal, such as stacks, queues, and priority queues~\cite{DBLP:conf/spaa/HendlerIST10,DBLP:conf/sss/RusanovskyABGHR21,fatourou2022performance,egorov2025fast}. 
In contrast, the application of flat-combining to traversal-intensive data structures, such as linked lists, skip lists, and trees, remains largely unexplored.

A possible reason is that flat-combining, in its original form, is less well suited to traversal-intensive data structures. In a naive implementation of a linked list or skiplist based on flat-combining, the combiner thread both traverses the data structure and executes all pending operations. As the data structure grows, this centralized traversal becomes an increasingly significant bottleneck, severely limiting throughput and scalability. In contrast, \sys adopts a phase-based approach that allows multiple threads to do most of the work associated with traversing and preparing operations concurrently, leaving only the minimal coordination necessary to complete the operations.

\section{Conclusions}
\label{sec:conclusions}
In this work, we present \sys, an NVM-optimized storage engine designed to bridge the gap between state-of-the-art \NVMM key-value stores and the API requirements of contemporary database management systems. \sys natively supports atomic multi-key writes and consistent snapshot iteration while also guaranteeing durable linearizability. At the heart of \sys is a novel concurrent durable skiplist index whose concurrency control and persistence mechanisms are carefully co-designed to maximize performance, allowing \sys to outperform prior work by up to 75\% in end-to-end throughput. 



\bibliography{refs}
\appendix
\section{ListDB Crash-Consistency Bug}
\label{appendix:listdb_bug}
In summary, to add an entry to the \emph{durable staging tier}, ListDB first reserves space in the WAL using a bump-pointer allocation. It then writes the entry and the offset of the next entry in the WAL and persists it. Finally, it marks entry as valid, which means that the entry will be restored during recovery. This, effectively, forms a linked chain. During recovery, the algorithm traverses this chain and checks whether each entry is valid; once it encounters an invalid entry, it stops.

We observe that this behaviour can violate durable linearizability because an acknowledged write operation is not guaranteed to be visible after a crash.

In particular, recovery relies on the following invariants:
\begin{myitemize}
    \item \textbf{Reachability:} Every successfully committed (i.e., operation that has returned success and therefore must be durable and visible) WAL record must be reachable from the head of the WAL chain by following \texttt{next} pointers.
    \item \textbf{Stop condition:} If the recovery procedure stops at the first invalid record, then no valid record may appear \emph{after} an invalid one in the traversal order.
\end{myitemize}

The following execution violates these invariants.
\begin{myenumerate}
    \item At time 0, Thread~1 reserves space in the WAL and starts persisting the node's data, but has not yet marked the entry as valid.
    \item At time 1, Thread~2 reserves space in the WAL, persists its entry, marks it as valid, inserts the node into the skiplist, and returns.
    \item At time 2, a crash happens.
\end{myenumerate}

After such an execution, the entry written by Thread~2 may not be reachable during recovery, because Thread~1 has not yet created the link that allows the recovery procedure to reach subsequent entries. As a result, operations that returned successfully before the crash can be lost, violating durable linearizability. To address this issue, we designed a fix to their algorithm that preserves its lock-free property and ensures that the WAL chain can skip over invalid records, thereby enforcing the adherence to durable linearizability. We call this fixed version of ListDB ListDB(sync) and it is the verison that we use in our benchmark evaluation.





\newcommand{\newtext}[1]{{\color{red}#1}}

\section{Proof of Durable Linearizability}
\label{sec:dl-proof}

We model the operations in the \sys\/ API (\S\ref{sec:interface}) as
reads and writes over the sets of key/value pairs. 
Specifically, \Op{Get} that returns
the value $\valu$ for the key $\key$ is represented
as $\oread(\{(\key, \valu)\})$, 
\Op{Snapshot} of the keys $\key_1, \key_2, \dots$
returning the values $\valu_1, \valu_2, \ldots$
is represented as $\oread(\{(\key_1, \valu_1), (\key_2, \valu_2), \ldots\})$,
\Op{Put} of a value $\valu$ into the key $\key$
as $\owrite(\{(\key, \valu)\})$, 
\Op{WriteBatch} of the values $\valu_1,$ $\valu_2, \ldots$
into the keys $\key_1, \key_2, \dots$
as $\owrite(\{(\key_1, \valu_1), (\key_2, \valu_2), \ldots\})$,
and \Op{Delete} of the value associated with $\key$ as
$\owrite(\{(\key, \bot)\})$.

For an arbitrary operation $o$, we let 
$\keys(o) = \{\key_1, \key_2, \ldots\}$ and 
$\valus(o) = \{\valu_1, \valu_2, \ldots\}$ denote
the key and the value components of its 
argument, respectively. 

Our proof follows the dependency graph framework originally introduced by
Adya~\cite{adya99-weak-consis} to capture sufficient conditions for
transaction serializability. We specialize this framework to our setting by
replacing read-only and read-write transactions with $\oread$ and $\owrite$
operations, respectively. To reason about operations left incomplete by
crashes (as required by durable linearizability), we follow Khyzha et
al.~\cite{artyom}, who extend~\cite{adya99-weak-consis} with a
\emph{visibility} predicate to accommodate histories with 
both complete and incomplete (commit-pending) transactions. 
In our setting these correspond to complete and
incomplete operations, respectively.


We first formalise the notion of a dependency graph. Note that for
simplicity, we interchangeably write $x \in p$ and $p(x)$ for a
predicate $p$. Given (possibly infinite) execution $\sigma$ of the algorithm, we let:

\begin{itemize}
	\item $\allrdr(\sigma)$ be the set of
  all $\oread$ operations in $\sigma$; 

  \item $\allwrr(\sigma)$ be the set of all 
  $\owrite$ operations in $\sigma$;

	\item $V(\sigma)$ be the set of all the operations in 
	$\sigma$; and
	
	\item $\rt(\sigma) \subseteq 
	V(\sigma)\times 
	V(\sigma)$ be a binary relation defined 
	as follows: for all $o_1,o_2\in V(\sigma)$, 
	$(o_1,o_2)\in\rt(\sigma)$ if 
	and only if $o_1$ completes before $o_2$ is invoked.
\end{itemize}

\begin{definition}
  \label{dep-graph-def}
  Let $\sigma$ be an execution. A \emph{dependency graph} for $\sigma$
  is a tuple $G=(V(\sigma),\rt(\sigma), \vis, \WR,\WW,\RW)$, where the
  visibility predicate $\vis \subseteq V(\sigma)$ and relations
  $\WR,\WW,\RW\subseteq V(\sigma)\times V(\sigma)$ are such that:
	
  \begin{enumerate}[leftmargin=15pt]
  \item $\vis(o)$ holds for all complete operations and for a subset of 
    incomplete operations; and
  \item \begin{enumerate}[label=(\roman*)]
    \item $\WR = \bigcup_{\key \in \allkeys} \WR_\key$, 
      where 
      $\allkeys$ is the set of all keys used in the execution;
    \item for all $\key \in \allkeys$, if 
      $(\wrr,\rdr)\in\WR_\key$, 
      then $\wrr\in \allwrr(\sigma) \cap \vis$, $\rdr\in 
      \allrdr(\sigma)$, and 
      $(\key, 
      \valu) \in \wrr \cap \rdr$ for some $\valu$;
    \item for all $\key \in \allkeys$ and $\wrr_1,\wrr_2,\rdr$ 
      $\in V(\sigma)$ such that 
      $(\wrr_1,\rdr)\in\WR_\key$ and 
      $(\wrr_2,\rdr)\in\WR_\key$, we have $\wrr_1=\wrr_2$;
    \item for all $\key \in \allkeys$, if $\key \in \keys(\rdr)$ for some $\rdr\in 
      \allrdr(\sigma)$ and there exists no $\wrr\in 
      \allwrr(\sigma)$ such that 
      $(\wrr,\rdr)\in \WR_\key$, then $(\key, \bot) 
      \in \rdr$; 
      and
    \end{enumerate}
  \item \begin{enumerate}[label=(\roman*)]
    \item $\WW = \bigcup_{\key \in \allkeys} \WW_\key$, 
      where $\allkeys$ is the set of all keys used in the execution;
    \item for all $\key \in \allkeys$, $\WW_\key$ is a total 
      order over $\{\wrr \in \allwrr(\sigma)\ |$ $\key \in 
      \keys(\wrr) \wedge \vis(\wrr)\}$; and
    \end{enumerate}
  \item \begin{enumerate}[label=(\roman*)]
    \item $\RW = \bigcup_{\key \in \allkeys} \RW_\key$, 
      where $\allkeys$ is the set of all keys used in the execution;
    \item $\RW_\key = \{(\rdr,\wrr) \mid \exists \wrr'.\ 
      (\wrr',\rdr)\in\WR_\key \wedge 
      (\wrr',\wrr)\in\WW_\key\}\ \cup$\\
      \phantom\ \ \ \ \ \ \ $\{(\rdr,\wrr) \mid r\in 
      \allrdr(\sigma) \wedge 
      \wrr\in 
      \allwrr(\sigma) \cap \vis \wedge {}$ \\
      \phantom\ \ \ \ \ \ \  $\neg\exists \wrr'.\ 
      (\wrr',\rdr)\in\WR_\key 
      \wedge 
      \key \in \keys(\rdr) \cap \keys(\wrr)\}$.
    \end{enumerate}
  \end{enumerate}
\end{definition}


To prove durable linearizability, we rely on the theorem by Khyzha et 
al.~\cite{artyom} below. This theorem generalises the original dependency graph framework
of~\cite{adya99-weak-consis} to accommodate operations that may not complete.


\begin{theorem}
  \label{thm:linearizability-orig}
  An history $\sigma$ is linearizable if there exist $\vis$, $\WR$,
  $\WW$, and $\RW$ such that $G = (V(\sigma), \rt(\sigma), \vis,$
  $\WR, \WW, \RW)$ is an acyclic dependency graph.
\end{theorem}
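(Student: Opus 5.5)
The plan is to build an explicit linearization of $\sigma$ from a topological order of the dependency graph. Then I check that this order respects real time and the sequential read/write specification. Fix $\vis$, $\WR$, $\WW$, $\RW$ such that $G$ is an acyclic dependency graph. First, complete the history. Discard every incomplete operation not in $\vis$. For every incomplete operation in $\vis$, append a matching response at the end of the history: a write responds with an acknowledgement, and a read responds with the key/value pairs recorded in its label. Call the resulting set of operations $V'$; it contains every complete operation by clause~1 of Definition~\ref{dep-graph-def}. Because the appended responses come after every other event, the real-time order of the completed history is contained in $\rt(\sigma)$ restricted to $V'$.

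Second, let $R = (\rt(\sigma)\cup\WR\cup\WW\cup\RW)\cap (V'\times V')$. It is acyclic because $G$ is, so its transitive closure is a strict partial order, and I extend it to a total order $S$ on $V'$. For a finite history this is an ordinary topological sort. For an infinite $\sigma$ I would work prefix by prefix: linearizability of an infinite history is defined through its finite prefixes, and restricting $G$ to the operations of a prefix again gives an acyclic graph, so it suffices to treat finite histories. Since $\rt\subseteq R\subseteq S$, the order $S$ respects real time. Write operations impose no constraint in the sequential key/value specification, so only reads remain to be checked.

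Third, legality of reads. Fix $\rdr\in V'$ and $\key\in\keys(\rdr)$. If $(\wrr,\rdr)\in\WR_\key$, then $\wrr$ is visible, writes the value that $\rdr$ returns for $\key$, and precedes $\rdr$ in $S$. Let $\wrr'\neq\wrr$ be any other visible write on $\key$. By totality of $\WW_\key$, either $(\wrr',\wrr)\in\WW_\key$, which places $\wrr'$ before $\wrr$, or $(\wrr,\wrr')\in\WW_\key$, which by the first clause of $\RW_\key$ gives $(\rdr,\wrr')\in\RW$ and places $\wrr'$ after $\rdr$. Hence $\wrr$ is the last write to $\key$ before $\rdr$ in $S$. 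If there is no $\WR_\key$ edge into $\rdr$, clause~2(iv) gives $(\key,\bot)\in\rdr$, and the second clause of $\RW_\key$ puts $\rdr$ before every visible write on $\key$. So no write to $\key$ precedes $\rdr$, and $\bot$, the initial value, is the correct answer. The uniqueness clause~2(iii) guarantees that each read has at most one reads-from source, so these two cases are exhaustive. Together with the first two steps this shows that $S$ is a legal sequential history extending the real-time order of the completion, which is linearizability of $\sigma$.

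The step I expect to need the most care is the bookkeeping around incomplete operations. Some $\WR$ edges point to invisible reads; those reads are dropped, and removing vertices cannot create cycles. I must also make sure every write used in a reads-from or anti-dependency edge is in $V'$, which follows because the definitions of $\WR$ and $\RW$ only involve visible writes. Finally, the reduction of infinite executions to finite prefixes needs to be stated consistently with the definition of linearizability being used; that is the point I would double-check against Khyzha et al.
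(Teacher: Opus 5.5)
\paragraph{Comparison with the paper.} The paper gives no proof of this theorem; it imports the result from Khyzha et al.~\cite{artyom}. Your argument is a self-contained version of the standard topological-sort proof behind that result, and for finite histories it is correct. Your $V'$ is exactly $\vis$. The real-time order of the completed history is contained in $\rt$, so the linear extension respects it. Legality of each read follows from totality of $\WW_{k}$, the two clauses of $\RW_{k}$, and clause~2(iv) for the $\bot$ case.

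Compared with the citation, your route makes explicit which clauses of Definition~\ref{dep-graph-def} are used. It also shows that only cycles among \emph{visible} operations matter. That weaker form of acyclicity is exactly what the paper establishes in the proof of Theorem~\ref{thm:dep-graph-acyc}, which only rules out cycles whose vertices all satisfy $\vis$. The citation buys brevity. It leaves to Khyzha et al.\ the specialisation from transactions to single read/write operations and the treatment of infinite executions.

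\paragraph{The prefix reduction needs tightening.} Restricting $G$ to the operations invoked in a finite prefix $P$ does not in general give a dependency graph for $P$. A read that is still pending at the end of $P$ may have its $\WR_{k}$ source invoked after $P$. Clause~2(iv) constrains every read, visible or not, so it can then fail for that read.

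The fix is to run your finite construction on $P$ directly. Keep the visible writes invoked in $P$ and only the reads that complete within $P$; pending reads may be dropped. Each such read $r$ has its $\WR_{k}$ source $w$ inside $P$. Otherwise $(r,w)\in\rt$ and $(w,r)\in\WR$ would form a cycle among visible operations. Hence the $\WR$ and $\RW$ edges relevant to $r$ are unchanged, and your legality argument goes through verbatim.
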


Recall that a history $\sigma$ is durably linearizable iff (i) any
thread executing before a crash does not resume after the crash and
(ii) the history $\sigma$ with crashes removed is linearizable. This
close relationship between linearizability and durable linearizability
means that \cref{thm:linearizability-orig} can be readily extended to
durable linearizability.
\begin{theorem}
  \label{thm:linearizability}
  An history $\sigma$ is durably linearizable if there exist $\vis$,
  $\WR$, $\WW$, and $\RW$ such that
  $G = (V(\sigma), \rt(\sigma), \vis,$ $\WR, \WW, \RW)$ is an acyclic
  dependency graph.
\end{theorem}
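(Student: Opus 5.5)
The plan is to reduce Theorem~\ref{thm:linearizability} directly to Theorem~\ref{thm:linearizability-orig}, applied not to $\sigma$ itself but to the crash-free history $\sigma'$ obtained from $\sigma$ by erasing every crash event. By the definition of durable linearizability recalled above, it suffices to check two things. First, condition (i), that no thread active before a crash resumes afterwards, holds by the model assumption, since post-crash threads are fresh. Second, $\sigma'$ must be linearizable. So the work is to show that the given acyclic dependency graph $G$ for $\sigma$ is also an acyclic dependency graph for $\sigma'$, in the sense of Definition~\ref{dep-graph-def}.

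The key steps, in order, are as follows.
\begin{enumerate}
\item \textbf{Same operations.} Note that $V(\sigma') = V(\sigma)$, $\allrdr(\sigma') = \allrdr(\sigma)$ and $\allwrr(\sigma') = \allwrr(\sigma)$. Erasing crashes removes no invocations or responses. An operation cut short by a crash simply becomes a pending operation in $\sigma'$, because its thread never issues any further event.
\item \textbf{Same real-time order.} Show $\rt(\sigma') = \rt(\sigma)$. Real-time precedence depends only on the relative positions of responses and invocations, and these are unchanged when crash events are removed.
\item \textbf{Same visibility, complete means complete.} Check that $\vis$ still satisfies clause~1 of Definition~\ref{dep-graph-def} for $\sigma'$. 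The complete operations of $\sigma'$ are exactly those of $\sigma$, so every complete operation is still in $\vis$. The remaining visible operations are a subset of the incomplete (pending) ones, which is what the clause allows.
\item \textbf{Relational clauses carry over.} Clauses~2--4 mention only $\keys(\cdot)$, the read/write sets, $\vis$, and the relations themselves. None of these refers to crash events, so they transfer verbatim.
\end{enumerate}
Given these steps, $G$ is an acyclic dependency graph for $\sigma'$. Theorem~\ref{thm:linearizability-orig} then gives that $\sigma'$ is linearizable, and hence $\sigma$ is durably linearizable.

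The main obstacle I expect is step~3, specifically the treatment of incomplete operations. Theorem~\ref{thm:linearizability-orig}, following Khyzha et al., is formulated for histories with pending (commit-pending) operations. There, a linearization may include a chosen subset of pending operations, namely the visible ones, and drop the rest. I need to argue that a crash-interrupted operation in $\sigma$ is faithfully modelled as a pending operation in $\sigma'$, and in particular that its thread issues no later events. That is exactly what condition (i) of durable linearizability guarantees. Without it, a resumed thread could place a response after the crash point, which would change $\rt$ and the completion status of that operation. I would therefore state this correspondence explicitly as a short lemma before invoking Theorem~\ref{thm:linearizability-orig}.
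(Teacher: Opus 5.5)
Your proposal is correct and takes the same route as the paper, which justifies Theorem~\ref{thm:linearizability} in one remark: durable linearizability is condition (i) together with linearizability of the crash-free history, so Theorem~\ref{thm:linearizability-orig} applies directly. Your steps 1--4 (unchanged operation sets, real-time order and completion status after erasing crashes, and the relational clauses carrying over) spell out what the paper only says ``can be readily extended'', and your explicit use of condition (i) to model crash-interrupted operations as pending is the right way to close that step.
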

Here, the $\vis$ predicate is used to indicate whether an operation
has taken effect; if an operation does not return due to a crash, it
is not known to they system whether it was successful.

We now prove that every history of \sys is durably linearizable.
Fix one such history $\sigma$. Our strategy is to find witnesses for 
all $\WR_\key$ and $\WW_\key$ that validate the conditions of 
Theorem~\ref{thm:linearizability}. To this end, consider the following 
definition.

\begin{definition}\label{def:tau-tag}

Function $\tau:\sigma\rightarrow \mathbb{N} \cup \{\bot\}$
maps each operation $o \in V(\sigma)$ to a version as follows:
\begin{itemize}
	\item 
	For a $\oread$ operation $\rdr$, $\tau(\rdr)$ equals the value of\\
	$\visibleSeqt$ read in line~\ref{code:get_op_sid_cache} of \cref{algo:get_operation} if $\rdr$ is 
	a \Op{Get}. If $\rdr$ is a \Op{Snapshot}, $\tau(\rdr)$ equals the value of
	$\visibleSeqt$ read at the beginning of the \Op{Snapshot} operation as described in 
	\cref{sec:detailed_algo_finish}.
	
	\item 
	For a $\owrite$ operation $\wrr$, $\tau(\wrr)$ equals the value written to
	$\visibleSeqt$ in line~\ref{code:increment_seq_id} of \cref{lab:insert_critical} 
	if $\wrr$ is 
	a \Op{Put}. If $\wrr$ is a \Op{WriteBatch}, $\tau(\wrr)$ equals the value written to 
	$\visibleSeqt$ in line~\ref{code:batch_seq_id_assignment} of 
	\cref{algo:process_wb}.
	
	\item If an operation $o$ does not execute the corresponding line 
	specified above, $\tau(o) = \bot$.
\end{itemize}

\end{definition}

We now define the witnesses as follows:

\begin{itemize}
	\item $\vis(\rdr)$ holds for $\rdr \in \allrdr(\sigma)$ iff $\rdr$ 
	completes and returns;
	
	\item $\vis(\wrr)$ holds for $\wrr \in \allwrr(\sigma)$ iff $\wrr = \text{\Op{Put}}$ and $\wrr$ 
	persists line~\ref{code:pmem_node_insert} in \cref{lab:insert_critical} to persistent memory or 
	$\wrr = \text{\Op{WriteBatch}}$ and $\wrr$
	persists line~\ref{code:commit_batch} in \cref{algo:process_wb} to persistent 
	memory.

      \item $(\wrr,\rdr)\in\WR_\key$ if and only if

        \begin{itemize}
        \item $\wrr \in 
          \allwrr(\sigma) \cap \vis$
        \item 
          $\rdr \in \allrdr(\sigma)$
        \item $\tau(\wrr) \le 
          \tau(\rdr)$
        \item  
          $\key \in 
          \keys(\wrr) \cap \keys(\rdr)$ and
        \item 
          $\neg\exists \wrr'.$ $\vis(\wrr') \wedge \key \in 
          \keys(\wrr') \wedge \tau(\wrr) < 
          \tau(\wrr') \le 
          \tau(\rdr)$
        \end{itemize}
	
      \item $(\wrr,{\wrr}')\in\WW_\key$ if and only if

        \begin{itemize}
        \item $\wrr,\wrr' \in 
	\allwrr(\sigma) \cap \vis$
      \item  
	$\tau(\wrr)<\tau({\wrr}')$ and 
      \item $\key \in 
	\keys(\wrr) 
	\cap \keys(\wrr')$
        \end{itemize}
 and
	
	\item $\RW_\key$ is derived from $\WR_\key$ and $\WW_\key$ as per the 
	dependency 
	graph definition.
	
\end{itemize}

To show that our witnesses satisfy the requirements of the dependency
graph, we rely on the following properties. It is easy to see that the
algorithms described in \cref{sec:updates,sec:recovery} satisfy the following lemma:

\begin{lemma}
	\label{abd:tags_lemma}
	\
	\begin{enumerate}
        \item For any $\wrr_1,\wrr_2\in \allwrr(\sigma) \cap \vis$
          if $\tau(\wrr_1)=\tau(\wrr_2)$ then
          $\wrr_1=\wrr_2$.\label{abd:tags_lemma1}
		
		
		\item For every $\wrr, \rdr \in V(\sigma)$ and $\key 
                  \in \allkeys$, if $(\wrr, \rdr) \in \WR_\key$ then 
                  $\exists \valu.\ (\key, 
		\valu) \in \wrr \cap \rdr$.

                \item For every $\rdr \in \allrdr(\sigma)$
                  and 
                  $\key \in \keys(\rdr)$ if $\neg \exists \wrr'.\
                  (\wrr', \rdr) \in \WR_\key$ then $(\key, \bot) \in
                  \wrr \cap \rdr$.

	\end{enumerate}
\end{lemma}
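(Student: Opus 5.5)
Item~\ref{abd:tags_lemma1} follows from how \visibleSeqt evolves; items~2 and~3 follow from showing that a reader with cached version $\tau(\rdr)$ sees exactly the nodes of the visible writes with $\tau \le \tau(\rdr)$. Throughout I use the identity that a \Op{Put} $\wrr$ stamps its node with $\tau(\wrr)$, and a \Op{WriteBatch} $\wrr$ stamps all of its nodes with $\tau(\wrr)$. This holds because \Op{Process\_Put} sets \texttt{final\_version} $=\visibleSeqt+1$. Inside a batch, \visibleSeqt is not modified (line~\ref{code:skip_seq_id_increment}), and afterwards it is set to $\commitSeqIdt+1$, where $\commitSeqIdt=\visibleSeqt$ at batch start.

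\textbf{Item~\ref{abd:tags_lemma1}.} Every assignment to \visibleSeqt (line~\ref{code:increment_seq_id} of \cref{lab:insert_critical}, line~\ref{code:batch_seq_id_assignment} of \cref{algo:process_wb}) is made by the holder of the exclusive combiner lock. Each such assignment writes one more than the value the combiner read at the start of processing that advertisement. So, within one crash-free epoch, the successive values written are strictly increasing and each writing operation writes a distinct value.

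Across a crash, recovery re-initialises \visibleSeqt to the maximum version among recovered nodes. For each visible write this maximum is at least its $\tau$, by the following argument.
\begin{itemize}
\item For a visible \Op{Put}, the version flush (line~\ref{code:pmem_flush_seq_id}) precedes, in the same thread's FIFO buffer, the \texttt{SFENCE} of the \persistr at line~\ref{code:pmem_flush_pred_next}. Persistence of line~\ref{code:pmem_node_insert} therefore implies persistence of the final version, as in the message-passing pattern of \cref{fig:MP}.
\item For a visible \Op{WriteBatch}, persisting line~\ref{code:commit_batch} clears \useCommitEntryt, so its nodes are not rolled back.
\end{itemize}
Post-crash tags are therefore strictly larger than all pre-crash ones, and $\tau$ is injective on visible writes.

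One subtle case needs an explicit convention. A write can be visible while its thread crashed before writing \visibleSeqt, which would give $\tau=\bot$. For such writes I would read Definition~\ref{def:tau-tag} as assigning $\tau$ to be the version stored in its recovered node(s). These values are again unique and below the recovered counter.

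\textbf{Item~2.} Fix $(\wrr,\rdr)\in\WR_\key$ with $\tau(\rdr)=vv$. I would first prove a visibility invariant: at any moment, the nodes reachable at level~0 whose version is $\le \visibleSeqt$ are exactly those of visible writes with $\tau\le\visibleSeqt$. This holds because the combiner links nodes (lines~\ref{code:index_node_insert}--\ref{code:pmem_node_insert}) and persists them before publishing the new \visibleSeqt. Under the FIFO store-buffer model, a reader that observes $vv$ therefore sees all those links. Nodes with version $>vv$, or still carrying \texttt{MAX\_UINT}, are skipped by \Op{Cmp}.

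Because \Op{Cmp} orders by key and then by descending version, the level-0 node a \Op{Get} or iterator returns for $\key$ is the one with the largest version $\le vv$. By the last clause of $\WR_\key$ and injectivity of $\tau$, that node is $\wrr$'s node. Its value is the $\valu$ with $(\key,\valu)\in\wrr$, and this is what $\rdr$ returns; a tombstone yields $\valu=\bot$.

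\textbf{Item~3.} I read the conclusion as $(\key,\bot)\in\rdr$; the ``$\wrr\cap$'' appears to be a typo. If there is no $\wrr'$ with $(\wrr',\rdr)\in\WR_\key$, then no visible write on $\key$ has $\tau\le vv$. By the same invariant, no node for $\key$ with version $\le vv$ is reachable, so the traversal returns \texttt{NULL}, i.e.\ $\bot$.

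The main obstacle is the visibility invariant, together with its interaction with the crash convention for $\tau$. It requires arguing carefully about TSO ordering between the combiner's pointer stores and its \visibleSeqt store, and about concurrent \phasePromote updates that add upper-level shortcuts. I would handle the latter by noting that promotions only add shortcuts to nodes already linked at level~0, so they cannot expose a node early.
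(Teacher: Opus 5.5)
The paper gives no argument for this lemma beyond ``it is easy to see'', so your case analysis already goes further than the paper. You are also right that item~3 should read $(\key,\bot)\in\rdr$, and that visible writes with $\tau=\bot$ need a convention. However, the step your cross-crash argument rests on is false: persistence of the link store at line~\ref{code:pmem_node_insert} does \emph{not} imply persistence of the final version.

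The message-passing pattern of \cref{fig:MP} only constrains stores issued \emph{after} the \texttt{SFENCE}. The link store, like the version store, is issued \emph{before} the \texttt{SFENCE} inside the \texttt{persist\_range} at line~\ref{code:pmem_flush_pred_next}. The two \texttt{CLWB}s are unordered until that fence takes effect, and the link's cache line may also be evicted early, so the link can reach NVM first. The paper opens this window deliberately with its single-fence optimisation, and closes it in recovery by discarding a linked node whose version is still \texttt{MAX\_UINT}.

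A Put that crashes in this window therefore satisfies $\vis$ as defined (its link persisted), yet its node does not survive recovery. Two sub-cases follow.
\begin{itemize}
\item If the crash precedes line~\ref{code:increment_seq_id}, your convention ``$\tau$ is the version of the recovered node'' has no referent.
\item If that line was executed but its store is still buffered, then $\tau(\wrr)=v$, while the recovered maximum can be $v-1$. The first post-crash Put then also receives tag $v$, contradicting item~1. Moreover, a post-crash read of $\key$ with $\tau(\rdr)\ge v$ and no intervening later write to $\key$ is placed in $\WR_\key$ with $\wrr$, even though it cannot return $\wrr$'s value, contradicting item~2.
\end{itemize}
So with the witness as given, the step cannot be patched locally.

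The repair is to change the witness. Let $\vis(\wrr)$ for a Put require that \emph{both} the version store (line~\ref{code:pmem_seq_assignment}) and the link store (line~\ref{code:pmem_node_insert}) have persisted, i.e.\ that its node survives recovery with a real version. Message passing, applied in the right direction, shows this still holds in the two cases you need:
\begin{itemize}
\item every completed Put: a worker sees \texttt{Success} only after the \texttt{SFENCE} at line~\ref{code:pmem_flush_pred_next} has been debuffered, and the combiner's own Put returns only after the \texttt{MFENCE} at line~\ref{code:mfence};
\item every Put whose tag another thread has read from \texttt{visible\_version}, which is precisely what your visibility invariant requires.
\end{itemize}
Your WriteBatch case is already sound for the same reason: the store clearing \useCommitEntryt is issued after the batch's last \texttt{SFENCE}, so its persistence does imply that all of the batch's links and versions are durable. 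With the strengthened predicate, your arguments for items~1--3 go through.

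For histories with several crashes you should additionally assume that recovery durably removes the nodes it excludes, which the paper's recovery description leaves unspecified. A rolled-back batch node carries version one more than the stored \commitSeqIdt. That is exactly the tag the first post-recovery Put receives, so the node must not be resurrected by a later recovery.
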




Our proof also relies on the following auxiliary lemma:
\balance
\begin{lemma}
	\label{abd:tau-non-inc}
	\
	\begin{enumerate}
		\item For all $\key$ and $\rdr,\wrr\in V(\sigma)$, if 
		$(\rdr,\wrr)\in\RW_\key$ then 
		$\tau(\rdr)<\tau(\wrr)$.
		\item For all $o_1,o_2\in V(\sigma)$, if 
		$(o_1,o_2)\in\RT$ and $\vis(o_1) \wedge \vis(o_2)$, 
		then 
		$\tau(o_1)\leq\tau(o_2)$.
		Moreover, if $o_2$ is a $\owrite$, then 
		$\tau(o_1)<\tau(o_2)$.\label{abd:tau-non-inc2}
	\end{enumerate}
\end{lemma}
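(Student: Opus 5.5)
Part (1) follows by unfolding the definition of $\RW_\key$ (Definition~\ref{dep-graph-def}) into its two disjuncts and using the construction of $\WR_\key$ and $\WW_\key$ from $\tau$. In the first disjunct there is $\wrr'$ with $(\wrr',\rdr)\in\WR_\key$ and $(\wrr',\wrr)\in\WW_\key$. The latter gives $\tau(\wrr')<\tau(\wrr)$ with $\vis(\wrr)$ and $\key\in\keys(\wrr)$. Suppose for contradiction that $\tau(\wrr)\le\tau(\rdr)$. Then $\wrr$ is a visible writer of $\key$ with $\tau(\wrr')<\tau(\wrr)\le\tau(\rdr)$, which contradicts the last clause in the definition of $(\wrr',\rdr)\in\WR_\key$. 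Hence $\tau(\rdr)<\tau(\wrr)$.

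In the second disjunct, $\rdr$ has no $\WR_\key$-writer, and $\wrr$ is visible with $\key\in\keys(\wrr)$. If $\tau(\wrr)\le\tau(\rdr)$, take among all visible writers $w''$ of $\key$ with $\tau(w'')\le\tau(\rdr)$ one whose $\tau$ is maximal. By Lemma~\ref{abd:tags_lemma}(\ref{abd:tags_lemma1}) visible writers have distinct tags, so this maximum is unique, and the set is finite because $\tau$ values are naturals below $\tau(\rdr)$. That maximal writer then satisfies every clause of $\WR_\key$, contradicting the assumption that $\rdr$ has no writer. One point to check is that visible writes have $\tau\neq\bot$. For \Op{Put}, persistence of line~\ref{code:pmem_node_insert} must be tied to the increment of \visibleSeqt. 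For \Op{WriteBatch}, persistence of line~\ref{code:commit_batch} must be tied to line~\ref{code:batch_seq_id_assignment}. I would handle the crash-in-between case by letting recovery reconstruct \visibleSeqt to include that version, and so assign $\tau$ via the recovered value. This needs a small amendment of Definition~\ref{def:tau-tag}.

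For part (2), I would first establish the invariant that \visibleSeqt is monotonically non-decreasing across the whole history, including crashes. During normal execution it is written only by the combiner under the exclusive lock. In \ProcessPut it is set to its old value plus one. In \Op{Process\_WriteBatch} it is set to \commitSeqIdt$+1$, where \commitSeqIdt was read from \visibleSeqt under the same lock and the lock has not been released in between. Each write therefore strictly increases it. Across a crash, recovery reconstructs it as the maximum persisted committed version. Rolled-back batches and \texttt{MAX\_UINT} nodes were never made visible (line~\ref{code:increment_seq_id} or line~\ref{code:batch_seq_id_assignment} was not reached), so the recovered value is at least every value previously observed by a read or written by a completed write.

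Given monotonicity, suppose $o_1$ completes before $o_2$ is invoked. The event defining $\tau(o_1)$ then precedes the event defining $\tau(o_2)$, since both lie within their operations' intervals. If $o_2$ is a read, it reads a value at least $\tau(o_1)$. If $o_2$ is a write, it stores old value $+1$, where the old value is at least $\tau(o_1)$, so the inequality is strict.

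The main obstacle is the case of a visible write whose thread crashed before completing. Such an operation is not the $o_1$ in $\RT$ (it never completes), but it may be $o_2$, and then its $\tau$ may be $\bot$ or derived from recovery. I would first try to formalise $\tau$ for such writes as the version stamped on its persisted node. I would then argue that this version exceeds every value of \visibleSeqt that existed when it was invoked, since the version is \visibleSeqt$+1$ read under the lock after invocation.
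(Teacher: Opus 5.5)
Your proposal is correct and follows the paper's proof. Part~1 is the same two-case unfolding of $\RW_\key$ against the last clause in the definition of $\WR_\key$, with the maximal-writer step made explicit, and Part~2 is the same observation that $o_2$ reads \texttt{visible\_version} only after $o_1$ has read or written $\tau(o_1)$, with the increment giving strictness when $o_2$ is a write.

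Your two additional points are genuine omissions in the paper's argument. The first is monotonicity of \texttt{visible\_version} across crashes, which the paper folds silently into ``$u \ge \tau(o_1)$''. The second is the $\tau=\bot$ case for a visible write that crashes before updating \texttt{visible\_version}.

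Your repair for the second point does not cover every sub-case. Suppose a \Op{Put}'s predecessor link reaches NVM while its node's version is still \texttt{MAX\_UINT}. The paper's $\vis$ calls that write visible, but recovery discards its node. Then taking $\tau$ from the recovered value, or from the version stamped on the node, can clash with another visible write's tag. The cleaner repair there is to tighten $\vis$ (a \Op{Put} is visible only if its node survives recovery) rather than to amend $\tau$.
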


\begin{proof}
	\
	\begin{enumerate}
		\item Let $\key$ and $\rdr,\wrr\in V(\sigma)$ be such that 
		$(\rdr,\wrr)\in\RW_\key$. 
		There 
		are two cases.
		\begin{itemize}
			\item Suppose that for some
			$\wrr'$ we have $(\wrr',r)\in\WR_\key$. It must be the 
			case 
			that 
			$({\wrr}',\wrr)\in\WW_\key$. 
			$(\wrr',r)\in\WR_\key$ 
			implies that 
			$\tau(\wrr') \le \tau(\rdr)$, and 
			$({\wrr}',\wrr)\in\WW_\key$ 
			implies $\tau(\wrr') < \tau(\wrr)$ and $\vis(\wrr)$. If 
			$\tau(\wrr) 
			\le 
			\tau(\rdr)$, then according to our definition of 
			$\WR_\key$, 
			it 
			must be the case that $(\wrr', 
			\rdr) \notin \WR_\key$. A contradiction.
			\item Suppose now that $\neg \exists {\wrr}'.\ 
			({\wrr}',\rdr)\in\WR_\key$. If $\tau(\wrr) 
			\le 
			\tau(\rdr)$, then 
			according to our definition of $\WR_\key$, it must be the case that 
			$\exists \wrr'.\ (\wrr', \rdr) \in \WR_\key$. A 
			contradiction.
		\end{itemize}
			\item Let $o_1$, $o_2 \in V(\sigma)$ be such that 
			$(o_1, 
			o_2) \in 
			\RT(\sigma)$ and $\vis(o_1) \wedge \vis(o_2)$. Since $o_1$ 
			completes, Let  $u$ be 
			the value read by $o_2$ 
			from 
			$\visibleSeqt$ in the following lines: 
			line~\ref{code:get_op_sid_cache} in \cref{algo:get_operation}
			if $o_2=\ $\Op{Get}; line~\ref{code:index_seq_read} in \cref{lab:insert_critical}
			if $o_2=\ $\Op{Put}; and line~\ref{code:start_batch} in \cref{algo:process_wb} 
			if $o_2=\ $\Op{WriteBatch}. 
			If $o_2=\ $\Op{Snapshot}, then let $u$ be the value of
			$\visibleSeqt$ read at the beginning of the \Op{Snapshot} operation as described in 
			Section~\ref{sec:detailed_algo_finish}.
			Since $o_2$ starts after $o_1$ completes, operation $o_2$ reads $u$ after
			$o_1$ reads or writes $\tau(o_1)$ from or to 
			$\visibleSeqt$. 
			Because $\vis(o_1)$ holds, we must have $u \ge 
			\tau(o_1)$. We 
			now have one of the following:
			\begin{itemize}
				\item If $o_2$ is a $\oread$, $u$ is used as 
				$\tau(o_2)$ 
				per 
				our definition, and we have $\tau(o_2) = u \ge 
				\tau(o_1)$.
				
				\item If $o_2$ is a $\owrite$, it increments $u$ before writing it
				to $\visibleSeqt$ (line~\ref{code:index_seq_read} in \cref{lab:insert_critical} or 
				line~\ref{code:batch_seq_id_assignment} in \cref{algo:process_wb}), 
				and thus 
				$\tau(o_2) > 
				u \ge \tau(o_1)$.
			\end{itemize}
			This proves the statement.
	\end{enumerate}
\end{proof}

\begin{theorem}
	\label{thm:dep-graph-acyc}
	$G=(V(\sigma), \RT(\sigma), \vis, \WR, \WW, 
	\RW)$ is an acyclic 
	dependency 
	graph.
\end{theorem}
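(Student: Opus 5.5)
The proof has two parts. First I show that the witnesses defined after Definition~\ref{def:tau-tag} satisfy each clause of Definition~\ref{dep-graph-def}. Then I show acyclicity by exhibiting a ranking of the vertices that every edge respects.

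\emph{Well-formedness.} Clause~1 needs $\vis$ to hold for every complete operation. For reads this holds by definition. A complete \Op{Put} or \Op{WriteBatch} has persisted line~\ref{code:pmem_node_insert} or line~\ref{code:commit_batch}: the \texttt{SFENCE} of the subsequent \Optt{persist\_range}, or the \texttt{MFENCE} at line~\ref{code:mfence}, precedes the return. Clause~2(ii) follows from the definition of $\WR_\key$ together with Lemma~\ref{abd:tags_lemma}(2). For 2(iii), suppose $(\wrr_1,\rdr),(\wrr_2,\rdr)\in\WR_\key$ with $\wrr_1\neq\wrr_2$. By Lemma~\ref{abd:tags_lemma}(1) their tags differ, say $\tau(\wrr_1)<\tau(\wrr_2)\le\tau(\rdr)$, which violates the ``no intervening writer'' clause of $\WR_\key$ for $\wrr_1$. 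Clause 2(iv) is Lemma~\ref{abd:tags_lemma}(3). For clause~3, $\WW_\key$ is a total order on the visible writers of $\key$, because $<$ on $\mathbb{N}$ is a strict total order and Lemma~\ref{abd:tags_lemma}(1) makes $\tau$ injective on visible writes. Clause~4 holds by construction.

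\emph{Acyclicity.} I order vertices lexicographically by $(\tau(o),\kappa(o))$, with $\kappa(\wrr)=0$ for writes and $\kappa(\rdr)=1$ for reads, and then check edge by edge. A $\WW$ edge strictly increases $\tau$ by definition. A $\WR$ edge $(\wrr,\rdr)$ has $\tau(\wrr)\le\tau(\rdr)$ and goes from $\kappa=0$ to $\kappa=1$, so it is strictly increasing lexicographically. An $\RW$ edge strictly increases $\tau$ by Lemma~\ref{abd:tau-non-inc}(1). For an $\RT$ edge $(o_1,o_2)$ between visible operations, Lemma~\ref{abd:tau-non-inc}(2) gives $\tau(o_1)\le\tau(o_2)$, with strict inequality when $o_2$ is a write. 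The remaining cases have $o_2$ a read, and I must rule out equal tags with $o_1$ a read and $o_2$ a read, or $o_1$ a write. A write-to-read edge with equal $\tau$ is fine, since $\kappa$ increases. A read-to-read edge with equal $\tau$ is the one real gap, because both reads have $\kappa=1$. I would refine the rank by adding a third component, the real-time position at which the read sampled $\visibleSeqt$. A read-to-read $\RT$ edge then increases this component. No $\WR$, $\WW$ or $\RW$ edge enters a read from a vertex of equal $(\tau,\kappa)$, because $\WR$ enters reads only from writes and $\RW$/$\WW$ leave reads or go between writes with strict $\tau$ increase. So every edge strictly increases the rank, and a cycle is impossible.

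\emph{Main obstacle.} The delicate step is edges touching invisible or $\tau=\bot$ operations. $\WR$, $\WW$ and $\RW$ only involve visible writes, but $\RT$ is defined on all of $V(\sigma)$. An incomplete operation has no outgoing $\RT$ edge, since it never completes. An incomplete invisible write has no $\WR$, $\WW$ or $\RW$ edges at all, so it can only be a sink of $\RT$ edges and cannot lie on a cycle. The subtle subcase is a visible write with $\tau=\bot$: one that persisted line~\ref{code:pmem_node_insert} but crashed before line~\ref{code:increment_seq_id}. For this case I would argue from the recovery procedure (\cref{sec:recovery}) that the recovered \texttt{visible\_version} assigns it the version it would have received. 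This gives it a well-defined $\tau$ that exceeds all pre-crash read tags, and lets Lemma~\ref{abd:tau-non-inc} be applied across crashes. If that extension fails, I would instead restrict $\vis$ to writes with $\tau\neq\bot$ and check that clause~1 still holds.
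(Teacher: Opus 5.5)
Your check of the dependency-graph clauses and your acyclicity argument are correct, and they follow the paper's proof. The paper shows that $\tau$ is non-decreasing along every edge of a cycle of visible operations (Lemma~\ref{abd:tau-non-inc} plus the definitions of $\WR$ and $\WW$) and strictly increasing along edges into writes. So such a cycle would consist only of reads joined by $\rt$ edges, which is impossible. Your lexicographic rank $(\tau,\kappa,\text{sampling time})$ packages these same three steps into one potential function.

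The problems are all in your last paragraph, where you try to handle cases the paper does not treat. The paper takes Lemmas~\ref{abd:tags_lemma} and~\ref{abd:tau-non-inc} as given and only considers cycles all of whose vertices are visible.

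\textbf{Pending reads.} Your argument that invisible operations cannot lie on a cycle covers invisible writes but not pending reads. $\RW_\key$ does not require the read to be visible. Take a \Op{Get}$(\key)$ that crashes before sampling \texttt{visible\_version}: it has $\tau=\bot$ and no $\WR_\key$ edge, hence an $\RW_\key$ edge to every visible write of $\key$. Add the $\rt$ edge from a write of $\key$ that completed before the \Op{Get} was invoked, and you have a 2-cycle. You need the paper's restriction to visible vertices, or you must give pending reads empty key sets.

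\textbf{Visible writes with $\tau=\bot$.} Neither of your remedies works on its own, because they fail in complementary crash cases.
\begin{itemize}
\item Your first remedy (take the version recovery would assign) fails when the store of line~\ref{code:pmem_node_insert} of Algorithm~\ref{lab:insert_critical} reaches NVM, e.g.\ by eviction before the \texttt{SFENCE}, while the node's version is still \texttt{MAX\_UINT}. Recovery discards that node, so the recovered \texttt{visible\_version} does not account for it. A post-crash write can then receive exactly the version you would assign, which breaks injectivity of $\tau$, uniqueness in $\WR_\key$, and totality of $\WW_\key$. The paper's own $\vis$, defined by persistence of that line, has the same defect; the paper sidesteps it by asserting Lemma~\ref{abd:tags_lemma}.
\item Your fallback (restrict $\vis$ to writes with $\tau\neq\bot$) fails when both the link and the version were persisted. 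Recovery keeps the node and post-crash reads return its value. With the write invisible, items~2 and~3 of Lemma~\ref{abd:tags_lemma} fail. Clause~1, the one you proposed to re-check, is not the one at risk.
\end{itemize}
The repair is to split on the outcome of recovery: a crashed write is visible iff recovery retains its node(s), and in that case $\tau$ is the version stamped on those nodes.
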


\begin{proof}
	From Lemma~\ref{abd:tags_lemma} 
        and the 
	definitions of $\vis$,
	$\WR$, $\WW$ and $\RW$ 
	it follows that $G$ is a dependency graph. 
	We now show that $G$ is acyclic. By contradiction, assume that the graph 
	$G$ contains a 
	cycle $o_1, \dots, o_n = o_1$ where $\vis$ holds for all $o_1, \dots, 
	o_n = o_1$. Then $n > 1$. 
	By Lemma~\ref{abd:tau-non-inc} and the definitions of $\tau$, $\WW$ and 
	$\WR$, we must have 
	$\tau(o_1) \leq \dots \leq \tau(o_n) = 
	\tau(o_1)$, so that 
	$\tau(o_1) = 
	\dots = \tau(o_n)$. 
	Furthermore, if $(o,o')$ is an edge of $G$ and $o'$ is a $\owrite$, 
	then 
	$\tau(o) < \tau(o')$. 
	Hence, all the operations in the cycle must be $\oread$s, and thus, all the 
	edges in the 
	cycle come from $\rt(\sigma)$. Then there exist $\oread$s $r_1$, 
	$r_2$ in 
	the 
	cycle such that 
	$r_1$ completes before $r_2$ is invoked and $r_2$ completes before $r_1$ is 
	invoked, 
	which is a contradiction.
\end{proof}

\balance

\end{document}